\tikzset{
    >=stealth',
    punkt/.style={
           circle,
           rounded corners,
           draw=black, thick, 
           text width=2.5em,
           minimum height=3em,
           text centered},
    invisible/.style={
           draw=none,
           text width=1em,
           minimum height=3em,
           text centered},
    pil/.style={
           ->,
           thick,
           shorten <=2pt,
           shorten >=2pt,}
}
\begin{document}
\newtheorem{fact}{Fact}
\newtheorem{defin}{Definition}   
\newtheorem{examp}{Example} 
\newenvironment{nota}{\textbf{\flushleft Notation.}}{\bigskip}
\newtheorem{remark}{Remark}

\newtheorem{theorem}{Theorem}
\newtheorem{lemma}{Lemma}
\newtheorem{corollary}{Corollary}

\newenvironment{proof}[1][Proof]{\begin{trivlist}
\item[\hskip \labelsep {\bfseries #1}]}{\end{trivlist}}

\newcommand{\yst}{\,|\,} 
\newcommand{\yfun}[3]{#1:#2\rightarrow #3} 
\newcommand{\ywb}[1]{\rule{0pt}{2.35ex}\overline{#1}} 

\newcommand{\yia}{\mathcal{I}} 
\newcommand{\yias}{\yia^{\star}} 
\newcommand{\yoa}{\mathcal{O}} 
\newcommand{\yfsms}{\mathcal{M}} 
\newcommand{\yoas}{\yoa^{\star}} 
\newcommand{\yoasn}{\yoa'^{\star}} 
\newcommand{\ydel}{\delta} 
\newcommand{\ylam}{\lambda} 
\newcommand{\ydom}{D} 
\newcommand{\ymu}{\mu} 
\newcommand{\ytau}{\tau} 
\newcommand{\ymfm}{M=(S,s_0,\yia,\yoa,\ydom,\ydel,\ylam)} 
\newcommand{\ymfn}{N=(Q,q_0,\yia,\yoa',\ydom ',\ymu,\ytau)}   
\newcommand{\ydio}[2]{\mathrel{\overset{#1/#2}{\rightarrow}}} 
\newcommand{\ydi}[1]{\mathrel{\overset{#1/}{\rightarrow}}} 
\newcommand{\ydo}[1]{\mathrel{\overset{/#1}{\rightarrow}}} 
\newcommand{\yd}{\mathrel{\rightarrow}} 
\newcommand{\yddio}[3]{\mathrel{\substack{#1/#2\\\rightarrow\\#3}}}
\newcommand{\yddi}[2]{\mathrel{\substack{#1/\\\rightarrow\\#2}}}
\newcommand{\yddo}[2]{\mathrel{\substack{/#1\\\rightarrow\\#2}}}
\newcommand{\ydd}[1]{\mathrel{\underset{#1}{\rightarrow}}}
\newcommand{\yssim}{\mathrel{\prec}} 
\newcommand{\ysim}{\mathrel{\preccurlyeq}} 
\newcommand{\yequ}{\approx} 
\newcommand{\yalike}{\sim} 
\newcommand{\ynequ}{\not\yequ} 
\newcommand{\ynalike}{\not\sim} 

\newcommand{\yeps}{\varepsilon} 
\newcommand{\yfim}{\hfill$\Box$} 
 
\newcommand{\ysse}{\subseteq}
\newcommand{\ysdif}{\setminus}
\newcommand{\yemp}{\emptyset}
\newcommand{\ypow}[1]{\mathcal{P}(#1)}
\newcommand{\ysyd}{\ominus} 
\newcommand{\ymset}[1]{\mathcal{T}(#1)} 
\newcommand{\ymsett}[2]{\mathcal{T}_{#2}(#1)} 
\newcommand{\ymsetl}{\mathcal{L}} 

\title{Intrinsic Properties of Complete Test Suites}

\author{Adilson Luiz Bonifacio\thanks{Computing Department, University of Londrina, Londrina, Brazil. In collaboration with Computing Institute - UNICAMP.} \and
Arnaldo Vieira Moura\thanks{Computing Institute, University of Campinas, Campinas, Brazil.}}

\date{} 

\maketitle


\begin{abstract}
Completeness is a desirable property of test suites.
Roughly, completeness guarantees that a non-equivalent implementation under test will always be identified.
Several approaches proposed sufficient, and sometimes also necessary, conditions on the
specification model and on the test suite in order to guarantee completeness.
Usually, these approaches impose several restrictions on the specification and on 
the implementations, such as requiring them to be reduced or
complete.
Further, test cases are required to be non-blocking~---~that is, they must run to completion~---~ on both the specification and the
implementation models.
In this work we deal  test cases that can be blocking,
we define a  new notion that captures completeness, and we characterize test suite completeness in this new scenario. 
We establish an upper bound on the number of states of implementations beyond which no test suite can be complete, both in the classical sense and in the new scenario with 
blocking test cases.
\end{abstract}

\section{Introduction}

Completeness of test suites has been largely studied for models based on Finite State Machines (FSMs)~\cite{BoniMS-Model-2012,HierU-Reduced-2002,DoroEY-Improved-2005,simap-checking-2010,BonifacioSAC2014,UralWZ-Minimizing-1997,SimaPY-Reducing-2012}.
A test suite  is called complete for a FSM specification when it provides  complete fault coverage~\cite{BoniMS-Model-2012,HierU-Reduced-2002}. 
Several works have proposed strategies for generating complete test suites~\cite{SilvPY-Generating-2009}, or for checking if a given test suite is complete for a given specification~\cite{BonifacioSAC2014}. 
Some of them presented necessary conditions~\cite{PetrB-Fault-1996,YaoPB-Fault-1994} for test suite completeness, whereas other approaches gave sufficient, but not necessary, conditions for test suite completeness~\cite{DoroEY-Improved-2005,PetrY-test-2000,simap-checking-2010,UralWZ-Minimizing-1997}.
Some more recent works have described necessary \emph{and} sufficient conditions for test suite completeness~\cite{BonifacioSAC2014,SilvPY-Generating-2009}.
All these works imposed  restrictions on the specification and implementations, or over the fault domains~\cite{DoroEY-Improved-2005,PetrY-test-2000,simap-checking-2010,UralWZ-Minimizing-1997,BonifacioSAC2014}. 
Some of them considered specifications with $n$ states and restricted  the implementations under test to have at most $n$ states. 
Further, in some approaches specification and implementations are required to be reduced  or completely specified machines. 
Always, test cases have been required to be non-blocking on both the specifications and the implementations models. This meaning that all test cases are assumed to run
to completion in these models. In particular, even if implementations are treated as black boxes, all test cases are assumed to run to completion on implementations. 

In this work we deal with the more general scenario where test cases can be blocking.
In particular, we do not require that all test cases run to completion when implementations can be \emph{partial FSMs}, and are treated as \emph{true black boxes}. 
We propose a new notion of equivalence, called ``alikeness'', and  
we extend the classical notion of equivalence when blocking test cases can be present, thus giving rise to the notion of ``perfectness'', \emph{in lieu} of the classical 
notion of completeness. 
We then use bi-simulation relations and reducibility over machines to characterize test suite perfectness in this new more general scenario.
  
A related issue that concerns test suite completeness is the maximum size of implementations that can be put under test.
Usually, earlier works constrained implementations to have at most the same number of states as the  given specification.
We are not aware for any work that gives a precise relationship between the maximum number of states in implementations and the size of test suites in order to get positive verdicts when such
implementations are put under  test. 
Here, we establish a precise upper bound on the number of states of implementations under test, beyond which no test suite can be complete, both in the classical sense and in the more general scenario when blocking test cases can be present.
The bound is based on test suite size and the number of states in the given specification.

We organize the paper as follows.
Basic definitions and notations appear in Section~\ref{basicconcepts}. 
Section~\ref{isomorphism}  gives the perfectness of test suites in terms of the property of isomorphism between machines. 
We relate the well-known notion of completeness to the notion of perfectness in Section~\ref{completenessxperfectness}. 
In Section~\ref{bounding-impls} we establish an upper bound on the number of states in  candidate implementations beyond which no test suite is complete. 
Section~\ref{m-perfectness} defines the notion of $m$-perfectness, where $m$ is the number of candidate implementations. 
Section~\ref{conclusion} states some conclusions. 


\section{Definitions and notation}\label{basicconcepts}

Let $\yia$ be an alphabet.
The length of any finite sequence $\alpha$ of symbols  over $\yia$ is indicated by $|\alpha|$.
The empty sequence will be indicated by $\yeps$, with $|\yeps|=0$.
The set of all sequences of length $k$ over $\yia$ is denoted by $\yia^k$, while
$\yias$ names the set of all finite sequences over $\yia$.
When we write $\sigma=x_1x_2\cdots x_n\in\yias$ ($n\geq 0$) we mean $x_i\in\yia$ ($1\leq i\leq n$),
unless noted otherwise, and similarly for other alphabets. 
Given any two sets of sequences $A, B\ysse \yias$, their symmetric difference will be indicated by
$A\ysyd B$, that is  $A\ysyd B=(\overline{A}\cap B)\cup (A\cap\overline{B})$,
where $\overline{A}$ indicates the complement of $A$ with respect to $\yias$.
The usual set difference is indicated by $A\ysdif B$. 

\begin{remark}\label{rem:dif-sim}
$A\ysyd B=\yemp$ iff\,%
\footnote{ Here, `iff' is short for `if and only if'.}
$A=B$.
\end{remark}

\subsection{Finite state machines and test suites}
Next, we write the definition of a Finite State Machine~\cite{BonifacioSAC2014,Gill-Introduction-1962}. 
\begin{defin}\label{fsm}
A FSM is a system $\ymfm$ where
\begin{itemize}
\item $S$ is a finite set of \emph{states}
\item $s_0\in S$ is the \emph{initial state}
\item $\yia$ is a finite set of \emph{input actions} or \emph{input events}
\item $\yoa$ is a finite set of \emph{output actions} or \emph{output events}
\item $D\subseteq S\times\yia$ is a \emph{specification domain}
\item $\delta: D\rightarrow S$ is the \emph{transition function}
\item $\lambda: D\rightarrow \yoa$ is the \emph{output function}. 
\end{itemize}
\end{defin}

In what follows $M$ and $N$ will always denote the FSMs 
$(S,s_0,\yia,\yoa,\ydom,\ydel,\ylam)$ 
and $(Q,q_0,\yia,\yoa',\ydom ',\ymu,\ytau)$, respectively.
Let $\sigma=x_1x_2\cdots x_n\in\yias$,
$\omega=a_1a_2\cdots a_n\in\yoas$ ($n\geq 0$). 
If there are states $r_i\in S$ ($0\leq i\leq n)$ such that
$\delta(r_{i-1},x_i)=r_i$ and  $\lambda(r_{i-1},x_i)=a_i$ ($1\leq i\leq n$), then
we may write  $r_0\ydio{\sigma}{\omega} r_n$.
When the input sequence $\sigma$, or the output sequence $\omega$, is not 
important, then we may write $r_0\ydi{\sigma} r_n$, or $r_0\ydo{\omega} r_n$, respectively, and 
when both sequences are not important we may write $r_0\yd r_n$.
We can also drop the target state, and write
$r_0\ydio{\sigma}{\omega}{}$ or $r_0\yd {}$.
It will be useful to extend the functions $\delta$ and $\lambda$ to pairs 
$(s,\sigma)\in  S\times \yias$.
Let $\widehat{D}=\Big\{(s,\sigma)\,\Big|\, s\ydi{\sigma}, \sigma\in \yias, s\in Q \Big\}$.
Define the extensions $\yfun{\widehat{\delta}}{\widehat{D}}{S}$ and 
$\yfun{\widehat{\lambda}}{\widehat{D}}{\yoas}$ by letting
$\widehat{\delta}(s,\sigma)=r$ and $\widehat{\lambda}(s,\sigma)=\omega$
whenever $s\ydio{\sigma}{\omega} r$.
When there is no reason for confusion, we may write $D$,
$\delta$ and $\lambda$ instead of $\widehat{D}$, $\widehat{\delta}$ and $\widehat{\lambda}$,
respectively. 
Also, the function $\yfun{U}{S}{\yias}$ will be useful, where $U(s)=\{\sigma\yst (s,\sigma)\in\widehat{D}\}$.
Informally, $U(s)$ denotes all input action sequences that can be run from the state $s$. 

Now we are in a position to define test cases and test suites. 
\begin{defin}\label{defin:teste-suite}
Let $M$ be a FSM.  
A \emph{test suite for $M$} is any finite nonempty subset of $\yias$.
Any element of a test suite is a \emph{test case}. 
\end{defin}

Before we can define test completeness, we need the classical notions of distinguishability and equivalence. 

\begin{defin}\label{disting-equiv-partial}
Let $M$ and $N$  be FSMs and let $s\in S$, $q\in Q$.
Let $C\ysse \yias$.
We say that  $s$ and $q$ are
\emph{$C$-distinguishable} iff $\lambda(s,\sigma)\neq \tau(q,\sigma)$ for some $\sigma\in U(s)\cap U(q)\cap C$, denoted $s\ynequ_C q$.
Otherwise, $s$ and $q$ are \emph{$C$-equivalent}, denoted $s\yequ_C q$.
We say that $M$ and $N$ are \emph{$C$-distinguishable} 
iff  $s_0\ynequ_C q_0$, and they are \emph{$C$-equivalent} iff $s_0\yequ_C q_0$.
\end{defin}
When $C$ is not important, or when it is clear from the context, we might drop the index.
When there is no mention to $C$, we understand that we are taking $C=\yias$.
In this case, the condition $U(s_0)\cap U(q_0)\cap C$ reduces to $U(s_0)\cap U(q_0)$.
For the ease of notation, we also write $M\yequ_C N$ when $M$ and $N$ are $C$-equivalent, and  
$M\ynequ_C N$ when they are $C$-distinguishable.

Now we can state the conventional notion  of a $m$-complete test suite.
\begin{defin}\label{def:complete-partial}
 Let $M$ be a FSM and $T$ a test suite for $M$. Let $m\geq 1$.
Then $T$ is \emph{$m$-complete for $M$} iff for any FSM $N$, with $U(s_0)\subseteq U(q_0)$ and with at most $m$ states, if $M\ynequ N$ then $M\ynequ_T N$. 
 \end{defin}
Note that if $\sigma$ runs to completion from $s_0$, that is, 
$s_0\ydi{\sigma}$, then $\sigma$ must also run to completion from $q_0$, that is we must have $q_0\ydi{\sigma}$.
The definition says that any discrepancy between the behaviors of the specification $M$ and any implementation $N$ will be detected if we run the tests in $T$ through $M$ and $N$, 
provided that we consider implementations with at most $m$ states.
Note that the technical condition  $U(s_0)\subseteq U(q_0)$ will always be satisfied if
we were to test implementations that were complete FSM models.
A FSM $M$ is said to be complete when $D=S\times \yia$, that is, for any state $s$ and any 
input symbol $x$, we always have $s\ydi{x} {}$.

\subsection{The notion of `alikeness'}

A \emph{blocking} test case for $M$ is a sequence $\sigma\not\in U(s_0)$, otherwise we say that $\sigma$ \emph{runs to completion} in $M$.
Then, given two FSM models $M$ and $N$, if $\sigma\in U(s_0)\ysyd U(q_0)$,
either $\sigma$ blocks in $M$ and runs to completion in $N$, or vice-versa. 
Given a test suite $T$ and two FSM models $M$ and $N$ ,we want to say when $M$ and $N$ are equivalent in some more general sense, that is, even considering that we may have blocking test cases, for $M$ or $N$, in $T$.
Intuitively, all $\sigma\in T$ that is a blocking test case for $M$ must also be
 a  blocking test case for $N$, and vice-versa.
Furthermore, any test case that is non-blocking for both $M$ and $N$ must  output identical behaviors when run through both models.
In this case $M$ and $N$ will be said to be \emph{$T$-alike}.

\begin{defin}
\label{disting-equiv-block}
Let $M$ and $N$  be  FSMs and let $s\in S$, $q\in Q$.
Let $C\ysse \yias$.
We say that  $s$ and $q$ are \emph{$C$-alike},  denoted $s\yalike_C q$, iff $\big(U(s)\ysyd U(q)\big)\cap C = \yemp$ and $\lambda(s,\sigma)= \tau(q,\sigma)$ for all $\sigma\in U(s)\cap U(q)\cap C$.
Otherwise, $s$ and $q$ are \emph{$C$-unlike}, denoted $s\ynalike_C q$.
We say that $M$ and $N$ are \emph{$C$-alike} iff  $s_0\yalike_C q_0$, otherwise they are \emph{$C$-unlike}. 
\end{defin}

We may also write $M\yalike_C N$ when $M$ and $N$ are $C$-alike, or $M\ynalike_C N$ when they are
$C$-unlike.
Again, when $C$ is not important, or when it is clear from the context, we might drop the index, and when there is no mention to $C$, we understand that we are taking $C=\yias$.

\begin{remark}\label{rema:equal-Us}
We note of the following simple observations. 
\begin{enumerate}
\item Using Remark~\ref{rem:dif-sim}, we note that $s\yalike q$ is equivalent to $U(s)=U(q)$ and $\lambda(s,\sigma)= \tau(q,\sigma)$ for all $\sigma\in U(s)$. 
\item If $C_1\ysse C_2$, then $s\yalike_{C_2} q$ implies $s\yalike_{C_1} q$.
\item If $s\yalike q$, then $s\yalike_C q$, for all $C\ysse\yias$. \yfim
\end{enumerate}
\end{remark}

An important aspect of the alikeness relation, $\yalike_C $, is that it is an equivalence relation when $M$ and $N$ are the same machine, that is, when $\yalike_C$ is defined over 
a single set.
We note that this is not the case, in general, with the distinguishability relation $\yequ_C$.

\begin{lemma}\label{lem:alike-trans}
Let $M$ be an FSM and let $C\ysse \yias$.
Then $\yalike_C$ is an equivalence relation on $S$.
\end{lemma}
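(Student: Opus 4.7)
The plan is to verify the three defining properties of an equivalence relation directly from Definition~\ref{disting-equiv-block}, specialized to the case $M = N$ (so $\tau = \lambda$, and $\yalike_C$ relates states of $S$ to states of $S$).

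Reflexivity and symmetry are essentially free. For $s \yalike_C s$, note that $U(s) \ysyd U(s) = \yemp$ trivially, and $\lambda(s,\sigma) = \lambda(s,\sigma)$ for every $\sigma \in U(s) \cap C$. Symmetry follows because both the symmetric difference operator $\ysyd$ and the equality relation $\lambda(s,\sigma) = \lambda(q,\sigma)$ are symmetric in $s$ and $q$.

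The substantive step is transitivity: assume $s \yalike_C q$ and $q \yalike_C r$, and derive $s \yalike_C r$. I would first handle the set-theoretic clause. Using Remark~\ref{rem:dif-sim}, the hypotheses $(U(s) \ysyd U(q)) \cap C = \yemp$ and $(U(q) \ysyd U(r)) \cap C = \yemp$ are equivalent to the equalities $U(s) \cap C = U(q) \cap C$ and $U(q) \cap C = U(r) \cap C$. Chaining these gives $U(s) \cap C = U(r) \cap C$, which is exactly $(U(s) \ysyd U(r)) \cap C = \yemp$. For the output clause, pick any $\sigma \in U(s) \cap U(r) \cap C$; the previous equalities immediately place $\sigma$ in $U(q) \cap C$ as well. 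Hence $\sigma \in U(s) \cap U(q) \cap C$ and $\sigma \in U(q) \cap U(r) \cap C$, so $\lambda(s,\sigma) = \lambda(q,\sigma) = \lambda(r,\sigma)$ by the two hypotheses applied in turn. This yields $s \yalike_C r$.

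I do not expect a real obstacle here: the only subtle point is recognizing that the symmetric-difference condition is best rewritten as an equality of the intersections with $C$ (via Remark~\ref{rem:dif-sim}), because this is what lets us transport a sequence $\sigma$ that witnesses $U(s) \cap U(r) \cap C$ into $U(q)$ and thereby invoke the output agreement through the intermediate state $q$.
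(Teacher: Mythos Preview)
Your proof is correct and takes essentially the same route as the paper: reflexivity and symmetry are immediate, and transitivity is handled by first showing $U(s)\cap C=U(q)\cap C=U(r)\cap C$ (the paper does this by direct element-chasing rather than by invoking Remark~\ref{rem:dif-sim}) and then chaining the output equalities through the middle state. One minor point: Remark~\ref{rem:dif-sim} as stated says $A\ysyd B=\yemp$ iff $A=B$, so your use of it for $(A\ysyd B)\cap C=\yemp\iff A\cap C=B\cap C$ tacitly relies on the elementary identity $(A\ysyd B)\cap C=(A\cap C)\ysyd(B\cap C)$, which is true but worth making explicit.
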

\begin{proof}
Let $s, r, p\in S$ be states of $M$.
We clearly have $U(s)\ysyd U(s)=\yemp$ and $\lambda(s,\alpha)=\lambda(s,\alpha)$ for all $\alpha\in U(s)\cap C$.
So, $\yalike_C $ is reflexive.
Also, set intersection, the symmetric set difference $\ysyd$ and, of course, equality are
commutative. Hence, $\yalike_C $ is symmetric.

For transitivity, assume $s\yalike_C r$ and $r \yalike_C p$. Let $\alpha\in U(s)\cap C$.
Thus $\alpha\in U(r)$ because  $s\yalike_C r$, and then $\alpha\in U(p)$ because  $r\yalike_C p$. So, $U(s)\ysse U(p)$. 
Since we already have symmetry, we get $p \yalike_C r$ and $r \yalike_C s$, and a similar argument gives $U(p)\ysse U(s)$, showing that $(U(s)\ysyd U(p))\cap C=\yemp$.
Now, let $\alpha\in U(s)\cap U(p)\cap C$. 
Since $s\yalike_C r$, we get $\alpha\in U(r)$ and so $\lambda(s,\alpha)=\lambda(r,\alpha)$.
But also $r\yalike_C p$, and so $\lambda(r,\alpha)=\lambda(p,\alpha)$, thus establishing
$\lambda(s,\alpha)=\lambda(p,\alpha)$.
We may  then conclude that $s\yalike_C p$, and $\yalike_C$ is transitive. \yfim
\end{proof}

\begin{remark}\label{rema:alike-trans}
We note that, in Lemma~\ref{lem:alike-trans}, the argument establishing the transitivity
of the alikness relation $\yalike_C$ is still valid when it is defined as a relation between
the states of two distinct machines. 
\end{remark}

When reducing FSMs in the presence of blocking test cases, we will need the following
technical result.

\begin{lemma}\label{lem:alike}
Let $M$ be a FSM and let $s,r\in S$ be states of $S$, with $s\yalike r$.
\begin{list}{}{\setlength{\itemsep}{0pt}\setlength{\parsep}{0pt}\setlength{\topsep}{0pt}\setlength{\parskip}{0pt}}
\item[\rm(1)] If $s\ydio{x}{a} p$ with $x\in\yia$ and $a\in\yoa$, then $r\ydio{x}{a} q$ with $p\yalike q$, for some $q\in S$.
\item[\rm(2)] If $s\ydio{\alpha}{\omega} p$ with $\alpha\in\yias$ and $\omega\in\yoas$, then $r\ydio{\alpha}{\omega} q$, with $p\yalike q$
for some $q\in S$.
\end{list}
\end{lemma}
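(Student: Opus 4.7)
The plan is to prove part (1) directly from the definition of $\yalike$ (unpacked via item 1 of Remark~\ref{rema:equal-Us}), and then derive part (2) by a straightforward induction on $|\alpha|$ using part (1) as the inductive engine.

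For part (1), I would first unfold $s\yalike r$ as $U(s)=U(r)$ together with $\lambda(s,\sigma)=\lambda(r,\sigma)$ for all $\sigma\in U(s)$ (since $M=N$ here, so $\tau=\lambda$). The hypothesis $s\ydio{x}{a} p$ gives $x\in U(s)=U(r)$, so $\delta(r,x)$ is defined; set $q=\delta(r,x)$. To pin down the output label on this transition as exactly $a$, I would invoke $\lambda(s,x)=\lambda(r,x)=a$, thereby obtaining $r\ydio{x}{a} q$. The substantive step is then to verify $p\yalike q$. To check $U(p)=U(q)$, I would show that for any $\sigma\in\yias$, $\sigma\in U(p)$ iff $x\sigma\in U(s)$, and similarly $\sigma\in U(q)$ iff $x\sigma\in U(r)$; combined with $U(s)=U(r)$ this yields $U(p)=U(q)$. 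To check equality of the output functions from $p$ and $q$, for any $\sigma\in U(p)=U(q)$ I would use the extended output function and the fact that $\widehat\lambda(s,x\sigma)=a\,\widehat\lambda(p,\sigma)$ and $\widehat\lambda(r,x\sigma)=a\,\widehat\lambda(q,\sigma)$, together with $x\sigma\in U(s)$ and $s\yalike r$, to conclude $\widehat\lambda(p,\sigma)=\widehat\lambda(q,\sigma)$.

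For part (2), I would proceed by induction on $n=|\alpha|$. The base case $n=0$ is immediate: $\alpha=\yeps$, $\omega=\yeps$, $p=s$, and we take $q=r$, for which $p\yalike q$ holds by hypothesis. For the inductive step, write $\alpha=\alpha' x$ with $x\in\yia$ and correspondingly $\omega=\omega' a$. Then $s\ydio{\alpha'}{\omega'} p'\ydio{x}{a} p$ for some intermediate $p'$. Applying the inductive hypothesis to $s\ydio{\alpha'}{\omega'} p'$ produces $q'$ with $r\ydio{\alpha'}{\omega'} q'$ and $p'\yalike q'$. A single application of part (1) to $p'\ydio{x}{a} p$ (using $p'\yalike q'$) then yields $q$ with $q'\ydio{x}{a} q$ and $p\yalike q$, which assembles into $r\ydio{\alpha}{\omega} q$ as required.

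The main obstacle is the $p\yalike q$ clause in part (1); everything else is routine. Extra care is needed to ensure the output on the single transition from $r$ under $x$ is indeed $a$ (not some other output compatible with the domain), and to rigorously pass between sequences $\sigma\in U(p)$ and their extensions $x\sigma\in U(s)$ under the extended functions $\widehat\delta,\widehat\lambda$. Once part (1) is in place, the induction for part (2) is essentially mechanical.
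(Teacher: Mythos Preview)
Your proposal is correct and follows essentially the same approach as the paper: part~(1) is proved by unfolding $s\yalike r$ via Remark~\ref{rema:equal-Us}(1), obtaining the transition $r\ydio{x}{a} q$, and then verifying $p\yalike q$ by relating $U(p)$, $U(q)$ and the outputs to $U(s)$, $U(r)$ through the prefix $x$; part~(2) is then a routine induction on $|\alpha|$ using part~(1). The only cosmetic difference is that the paper shows $U(p)\subseteq U(q)$ and then invokes symmetry for the reverse inclusion, whereas you argue both directions at once via the equivalence $\sigma\in U(p)\Leftrightarrow x\sigma\in U(s)$.
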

\begin{proof}
We first treat item 1.
We have $x\in U(s)$, and so $x\in U(r)$ because $s\yalike r$, which leads to $r\ydio{x}{b} q$ for some $q\in S$, $b\in\yoa$. 
Now, $x\in U(s)\cap U(r)$ and, since $s\yalike r$, we get $a=\lambda(s,x)=\lambda(r,x)=b$.
It remains to show that $p\yalike q$. 
Let $\alpha\in U(p)$. Then $x\alpha\in U(s)$, and again $x\alpha\in U(r)$. 
Since $M$ is deterministic, this gives $\alpha\in U(q)$, and so $U(p)\ysse U(q)$. 
Using Remark~\ref{rema:equal-Us}(1) we have $r\yalike s$, and a similar argument gives $U(q)\ysse U(p)$.
We conclude that $U(p)=U(q)$, and so $U(p)\ysyd U(q)=\yemp$.
Now, let $\beta\in U(p)\cap U(q)$. Then, $x\beta\in U(s)\cap U(r)$, and since $s\yalike r$
this gives 
$a\lambda(p,\alpha)=\lambda(s,x\beta)=\lambda(r,x\beta)=a\lambda(q,\alpha)$.
We conclude that $\lambda(p,\alpha)=\lambda(q,\alpha)$, as desired.

Now, item (2) follows by a simple indiction on $|\alpha|\geq 0$, and using the result of item 1. \yfim
\end{proof}

The notion of \emph{perfectness} has been introduced by Bonifacio and Moura~\cite{BonifacioSEFM2014,TR-IC-13-33}, in order to cope with test cases that may not run to completion either in the specification or in  the implementation models.  
It is based on the notion of alikness.
\begin{defin}[\cite{BonifacioSEFM2014}] 
\label{def:n-complete-relax}
Let $M$ be a FSM and $T$ be a test suite for $M$.
Then $T$ is \emph{perfect for $M$} iff for any FSM $N$, if $M\ynalike N$ then $M\ynalike_T N$.
\end{defin}
That is, when $T$ is a perfect test suite for a specification $M$, then for any implementation under test $N$, if $M$ and $N$ are unlike, then they are also $T$-unlike.

In Definition~\ref{def:n-complete-relax}, there is no limit in the size of the implementations. 
In the next definition, the key property of $M\ynalike N$ implying $M\ynalike_T N$ is
required to hold only for implementations with up to a predefined number of states.

\begin{defin} \label{def:n-perfecteness}
Let $M$ be a FSM, let $T$ be a test suite for $M$, and let $m\geq 1$.
Then $T$ is \emph{$m$-perfect for $M$} iff for any FSM $N$ with at most $m$ states, if $M\ynalike N$ then $M\ynalike_T N$.
\end{defin}

\subsection{Simulations and perfectness}
In~\cite{BonifacioSEFM2014,TR-IC-13-33} bi-simulation was used to 
\emph{characterize} test suite perfectness.

\begin{defin}\label{simulation}
Let $M$ and $N$  be FSMs.  
We say that a relation $R\ysse S\times Q$ is a \emph{simulation} (\emph{of $M$ by $N$}) iff  $(s_0,q_0)\in R$, and whenever we have $(s,q)\in R$ and $s\ydio{x}{a} r$ in $M$,
 then there is a state $p\in Q$ such that $q\ydio{x}{a} p$ in $N$ and with $(r,p)\in R$.
We say that $M$ and $N$ are \emph{bi-similar} iff there are simulation relations
$R_1\ysse S\times Q$ and $R_2\ysse Q\times S$. 
\end{defin}

The following simple facts will be used later.

\begin{fact}\label{fact:sim-trans}
The simulation relation is transitive, that is, let 
$M_i=(S_i,s_i,\yia,\yoa,\ydom_i,\ydel_i,\ylam_i)$ be FSMs, $i=1,2,3$, and where $M_2$ simulates $M_1$ and $M_3$ simulates $M_2$. 
Then, $M_3$ simulates $M_1$.
\end{fact}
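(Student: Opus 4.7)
The plan is to produce the desired simulation of $M_1$ by $M_3$ by taking the relational composition of the two given simulations. Concretely, let $R_{12}\ysse S_1\times S_2$ be a simulation of $M_1$ by $M_2$ and $R_{23}\ysse S_2\times S_3$ be a simulation of $M_2$ by $M_3$. I would define
\[
R_{13}=\Big\{(s,u)\in S_1\times S_3 \,\Big|\, \exists\,t\in S_2\text{ with }(s,t)\in R_{12}\text{ and }(t,u)\in R_{23}\Big\},
\]
and then verify the two clauses of Definition~\ref{simulation} directly.

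For the initial-state clause, since $(s_1,s_2)\in R_{12}$ and $(s_2,s_3)\in R_{23}$ by hypothesis, choosing $t=s_2$ as a witness immediately gives $(s_1,s_3)\in R_{13}$. For the transition-matching clause, suppose $(s,u)\in R_{13}$ and $s\ydio{x}{a} s'$ in $M_1$. Pick a witness $t\in S_2$ with $(s,t)\in R_{12}$ and $(t,u)\in R_{23}$. Applying the simulation property of $R_{12}$ to the transition $s\ydio{x}{a} s'$ yields some $t'\in S_2$ with $t\ydio{x}{a} t'$ in $M_2$ and $(s',t')\in R_{12}$. Applying the simulation property of $R_{23}$ to $t\ydio{x}{a} t'$ then yields some $u'\in S_3$ with $u\ydio{x}{a} u'$ in $M_3$ and $(t',u')\in R_{23}$. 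The pair $(s',u')$ therefore satisfies the defining condition of $R_{13}$ via the witness $t'$, so $(s',u')\in R_{13}$, as required.

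I do not anticipate any real obstacle here: the argument is the standard proof that relational composition preserves the forward simulation property, and each step is a direct application of the hypothesis. The only minor care needed is to keep the three state sets and the two witness relations clearly labelled so that each invocation of the simulation clause is performed in the correct machine.
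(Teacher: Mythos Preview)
Your proposal is correct and matches the paper's own proof essentially line for line: the paper also defines the composed relation $R=\{(s,p)\in S_1\times S_3\mid \exists\,q\in S_2,\ (s,q)\in R_1,\ (q,p)\in R_2\}$, checks $(s_1,s_3)\in R$ via the witness $s_2$, and then chains the two simulation clauses exactly as you do. There is nothing to add.
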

\begin{proof}
Let $R_1\ysse S_1\times S_2$ and $R_2\ysse S_2\times S_3$ be simulation relations.
Define $R\ysse S_1\times S_3$ by $(s,p)\in R$ iff $(s,q)\in R_1$ and $(q,p)\in R_2$, for some 
$q\in S_2$.
Firstly, since $(s_1,s_2)\in R_1$ and $(s_2,s_3)\in R_2$ we get $(s_1,s_3)\in R$, as needed.
Moreover, let $(s,p)\in R$ and $s\ydio{x}{a} s_1$. 
We must have $(s,q)\in R_1$ and $(q,p)\in R_2$ for some $q\in S_2$.
Since $R_1$ is a simulation, we get $q\ydio{x}{a} q_1$, with $(s_1,q_1)\in R_1$.
Since $R_2$ is a simulation, we get $p\ydio{x}{a} p_1$ with $(q_1,p_1)\in R_2$.
Then, $(s_1,p_1)\in R$, as desired.
\end{proof}

\begin{fact}\label{fact:simul} 
Let $M$ and $N$ be FSMs, and let $R\ysse S\times Q$ be a simulation of $M$ by $N$.
If $(s,q)\in R$ and $\delta(s,\alpha)=r$ for some $\alpha\in\yias$, then 
$\mu(q,\alpha)=t$ with $(r,t)\in R$, for a unique $t\in Q$.
\end{fact}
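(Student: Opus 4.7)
The plan is a straightforward induction on the length of $\alpha$, using Definition~\ref{simulation} as the single-step engine and the determinism of $N$ to extract uniqueness. Since $N$ is a (deterministic) FSM, its extended transition function $\mu$ is a partial function, so once we exhibit some $t\in Q$ with $\mu(q,\alpha)=t$ and $(r,t)\in R$, uniqueness is automatic; the real content is existence, i.e.\ showing that $\alpha$ runs to completion from $q$ in $N$ and that the reached state is $R$-related to $r$.

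For the base case $|\alpha|=0$, I would take $\alpha=\yeps$, so $\delta(s,\yeps)=s=r$ and $\mu(q,\yeps)=q$; setting $t=q$, the hypothesis $(s,q)\in R$ gives $(r,t)\in R$ directly.

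For the inductive step, suppose the claim holds for all sequences of length $n$, and let $\alpha=\beta x$ with $|\beta|=n$ and $x\in\yia$. From $\delta(s,\alpha)=r$ and the definition of $\widehat{\delta}$, there is an intermediate state $r'\in S$ with $\delta(s,\beta)=r'$ and a transition $r'\ydio{x}{a} r$ in $M$, where $a=\lambda(r',x)$. By the induction hypothesis applied to $\beta$, we get $\mu(q,\beta)=t'$ for the unique $t'\in Q$ with $(r',t')\in R$. Since $R$ is a simulation and $r'\ydio{x}{a} r$, Definition~\ref{simulation} supplies some $t\in Q$ with $t'\ydio{x}{a} t$ and $(r,t)\in R$. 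Combining, $\mu(q,\alpha)=\mu(t',x)=t$, completing the existence argument; uniqueness follows because $\mu$ is a function on $\widehat{D}'$.

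I do not foresee any genuine obstacle: the only subtlety is being careful that the definition of simulation is stated for single transitions $s\ydio{x}{a} r$, so it must be applied stepwise rather than directly to $\alpha$, which is precisely what the induction accomplishes. No additional hypotheses on $M$ or $N$ (such as reducedness or completeness) are needed.
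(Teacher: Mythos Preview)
Your proposal is correct and follows exactly the approach the paper takes: an induction on $|\alpha|\geq 0$ using Definition~\ref{simulation} for the single-step case, with uniqueness coming from the determinism of $N$. The paper's own proof is in fact just a two-line sketch of precisely this argument, so your write-up is simply a fully spelled-out version of it.
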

\begin{proof}
An easy induction on $|\alpha|\geq 0$. 
Such a $t\in Q$ is unique, since $N$ is deterministic. \yfim
\end{proof}

\begin{fact}\label{fact:bisimul} 
Let $M$ and $N$ be FSMs,  let $R\ysse S\times Q$ be a simulation of $M$ by $N$,
and let $L\ysse Q\times S$ be a simulation of $N$ by $M$.
Let $(s,q)\in R$, $(q,s)\in L$, and $\alpha\in\yias$.
If $\delta(s,\alpha)=r$, then 
$\mu(q,\alpha)=t$ with $(r,t)\in R$ and $(t,r)\in L$, for a unique $t\in Q$.
\end{fact}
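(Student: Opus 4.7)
The plan is to apply Fact~\ref{fact:simul} twice, once in each direction, and then to use the determinism of $M$ to reconcile the two results.

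First, I would apply Fact~\ref{fact:simul} directly to the simulation $R$ of $M$ by $N$. Since $(s,q)\in R$ and $\delta(s,\alpha)=r$, the fact yields a unique $t\in Q$ such that $\mu(q,\alpha)=t$ and $(r,t)\in R$. This already delivers the existence of $t$, the containment $(r,t)\in R$, and the uniqueness claim (which rests on $N$ being deterministic, exactly as in Fact~\ref{fact:simul}).

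Next, to establish $(t,r)\in L$, I would run the same fact on the other simulation. Now $L$ is a simulation of $N$ by $M$, the pair $(q,s)$ lies in $L$, and we have just produced $\mu(q,\alpha)=t$. Fact~\ref{fact:simul} therefore gives some $r'\in S$ with $\delta(s,\alpha)=r'$ and $(t,r')\in L$. The key observation is that $M$ is deterministic, so the already given $\delta(s,\alpha)=r$ forces $r'=r$, yielding $(t,r)\in L$ as required.

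There is no real obstacle: the argument is essentially ``apply Fact~\ref{fact:simul} on both sides and identify the matched states via determinism of $M$''. The only point that deserves care is making sure the hypotheses of the second application are in place, namely that we already know $\mu(q,\alpha)=t$ exists (handed to us by the first application) before invoking the simulation property of $L$, and that the uniqueness clause in the statement refers to $t\in Q$ rather than to $r$ (so it is handled entirely by the first step).
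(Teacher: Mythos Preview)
Your proposal is correct and follows essentially the same approach as the paper's own proof: apply Fact~\ref{fact:simul} first to $R$ to obtain the unique $t$ with $(r,t)\in R$, then to $L$ using $\mu(q,\alpha)=t$ to obtain some state in $S$ paired with $t$, and finally invoke determinism of $M$ to identify that state with $r$. The only difference is cosmetic (the paper names the intermediate state $p$ rather than $r'$).
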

\begin{proof}
From $\delta(s,\alpha)=r$ and $(s,q)\in R$ Fact~\ref{fact:simul} gives a unique $t\in Q$ with $\mu(q,\alpha)=t$
and $(r,t)\in R$.
From $(q,s)\in L$ and $\mu(q,\alpha)=t$, Fact~\ref{fact:simul} again gives some $p\in S$
with $(t,p)\in L$ and $\delta(s,\alpha)=p$. 
Since $M$ is deterministic and we already have $\delta(s,\alpha)=r$ we conclude that
$p=r$.
Hence,  $(t,r)\in L$ as desired. \yfim
\end{proof}

The next lemma shows a useful relationship between bi-simulations and alikeness.
\begin{lemma}\label{lema:RL}
Let $M$ and $N$ be FSMs,  let $R\ysse S\times Q$ be a simulation of $M$ by $N$,
and let $L\ysse Q\times S$ be a simulation of $N$ by $M$.
Let $(s_i,q)\in R$ and $(q,s_i)\in L$, $i=1,2$.
Then, $s_1\yalike s_2$.
\end{lemma}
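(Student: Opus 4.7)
The plan is to use $q$ as a bridge state: since both $s_1$ and $s_2$ stand in the bi-simulation relation to the same state $q$ (via $R$ and $L$), any behavior from $s_1$ must be mirrored at $q$, and hence from $s_2$, and vice versa. Concretely, I would verify the two conditions of Remark~\ref{rema:equal-Us}(1): first that $U(s_1)=U(s_2)$, and second that $\lambda(s_1,\sigma)=\lambda(s_2,\sigma)$ for all $\sigma\in U(s_1)$.

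For the input sequence sets, I would show $U(s_1)\ysse U(s_2)$ (the reverse inclusion being symmetric). Take any $\sigma\in U(s_1)$, so $\delta(s_1,\sigma)=r$ for some $r\in S$. Applying Fact~\ref{fact:bisimul} to the pair $(s_1,q)\in R$ together with $(q,s_1)\in L$ yields a unique $t\in Q$ with $\ymu(q,\sigma)=t$, so in particular $\sigma\in U(q)$. Now use $(q,s_2)\in L$ together with Fact~\ref{fact:simul} applied to the simulation $L$ of $N$ by $M$: since $\ymu(q,\sigma)=t$ is defined, we obtain $\ydel(s_2,\sigma)=r'$ for some $r'\in S$, and thus $\sigma\in U(s_2)$.

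For the output agreement, the key point I need is that a simulation actually preserves full input/output traces, not merely target states. I would either strengthen Fact~\ref{fact:simul} (by the same easy induction on $|\alpha|$ used in its proof, now tracking the accumulated output label) or re-establish it inline: whenever $(s,q)\in R$ and $s\ydio{\alpha}{\omega}r$, then $q\ydio{\alpha}{\omega}t$ for a unique $t$ with $(r,t)\in R$. Granting that, for any $\sigma\in U(s_1)\cap U(s_2)$ let $\omega_1=\ylam(s_1,\sigma)$ and $\omega_2=\ylam(s_2,\sigma)$. Applying the strengthened fact to $(s_1,q)\in R$ gives $\ytau(q,\sigma)=\omega_1$; applying it to $(s_2,q)\in R$ gives $\ytau(q,\sigma)=\omega_2$. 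Since $N$ is deterministic, $\ytau(q,\sigma)$ is single-valued, so $\omega_1=\omega_2$.

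Combining the two parts, $U(s_1)=U(s_2)$, so $\bigl(U(s_1)\ysyd U(s_2)\bigr)=\yemp$, and the outputs agree on this common set; by Definition~\ref{disting-equiv-block}, $s_1\yalike s_2$. The main obstacle, such as it is, is merely the notational one of lifting the single-step output-preservation clause of Definition~\ref{simulation} to arbitrary sequences, which the cited facts do not quite state verbatim; everything else is a direct reading off of the bi-simulation hypothesis.
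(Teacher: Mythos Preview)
Your proof is correct, but it follows a different route from the paper's. The paper argues by contradiction: assuming $s_1\ynalike s_2$, it isolates a common prefix $\alpha$ with $s_i\ydi{\alpha} r_i$ after which the two states diverge in a single step $x$ (either $r_1$ and $r_2$ both accept $x$ but emit distinct outputs, or exactly one of them accepts $x$); it then pushes $\alpha$ through $q$ via $R$ to reach a single state $u$ of $N$ (by determinism), and uses the one-step clause of Definition~\ref{simulation} together with $L$ to contradict each case. Your argument is direct, pushing entire sequences through $q$ to establish $U(s_1)=U(s_2)$ and output agreement in one pass. The trade-off is exactly the one you flag: you need the (easy) lift of the single-step output-preservation in Definition~\ref{simulation} to arbitrary sequences, which the paper's step-by-step case analysis avoids by appealing only to the raw definition at the final symbol. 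Both arguments are sound; yours is a bit more streamlined, while the paper's stays strictly within the Facts as stated.
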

\begin{proof}
For the sake of contradiction, assume that $s_1\ynalike s_2$.
Definition~\ref{disting-equiv-block} gives some $r_i\in S$, $a_i\in\yoa$ ($i=1,2$),
$x\in \yia$, and 
some $\alpha\in\yias$ with $s_i\ydi{\alpha} r_i$ ($i=1,2$),  and such that 
for some $t_1,t_2\in  S$, either
\begin{list}{}{\setlength{\itemsep}{0pt}\setlength{\parsep}{0pt}\setlength{\topsep}{0pt}}
\item[(1)] $r_i\ydio{x}{a_i} t_i$, $i=1,2$, and $a_1\neq a_2$; or
\item[(2)] $r_1\ydio{x}{a_1} t_1$, and $x\not\in U(r_2)$; or 
\item[(3)] $r_2\ydio{x}{a_2} t_2$, and $x\not\in U(r_1)$. 
\end{list}

From $(s_i,q)\in R$ and $s_i\ydi{\alpha} r_i$, Fact~\ref{fact:simul} gives $u_i\in Q$ such that 
$q\ydi{\alpha} u_i$ and $(r_i,u_i)\in R$, for $i=1,2$.
Since $N$ is deterministic, we get $u_1=u_2=u$ and so $(r_i,u)\in R$ ($i=1,2$).  

Now, if case (1) holds, then from $(r_i,u)\in R$ and using  Definition~\ref{simulation} we get
$u\ydio{x}{a_i} v_i$ for some $v_i\in Q$ ($i=1,2$). Again, since $N$ is deterministic,
we obtain $a_1=a_2$, a contradiction.

Assume that case (2) holds. 
Since $(r_1,u)\in R$ and $r_1\ydio{x}{a_1} t_1$, Definition~\ref{simulation} gives 
$u\ydio{x}{a_1} v_1$, for some $v_1\in Q$.
From $q\ydi{\alpha} u_2$ and $(q,s_2)\in L$, Fact\ref{fact:simul} gives some $r'_2\in S$
with $s_2\ydi{\alpha} r'_2$ and $(u_2,r'_2)\in L$.
But we already have $s_2\ydi{\alpha} r_2$, and so the determinism of $M$ gives $r'_2=r_2$.
Hence, $(u_2,r_2)\in L$ and then $(u,r_2)\in L$ because $u_2=u$.
But we also have $u\ydio{x}{a_1} v_1$ and so, using Definition~\ref{simulation},
we get $x\in U(r_2)$, contradicting the hypothesis of case (2).

Case (3) also leads to a contradiction, by a reasoning entirely analogous as was done 
for case (2). 

We conclude that, in fact, $s_1\yalike s_2$, as desired. \yfim
\end{proof}

The following result establishes a necessary and sufficient condition for perfectness.
\begin{theorem}[\cite{BonifacioSEFM2014}]
\label{necessity-sufficiency-block}
Let $M$ be a FSM and $T$ be a test suite for $M$. 
Then $T$ is perfect for $M$ iff any $T$-alike FSM is bi-similar to $M$. 
\end{theorem}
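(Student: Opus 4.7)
The plan is to prove the two directions of the iff separately, since they require different kinds of work: the left-to-right (necessity) direction must construct explicit simulation relations, while the right-to-left (sufficiency) direction is essentially a routing argument through the hypothesis.

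For the sufficiency direction, assume every $T$-alike FSM is bi-similar to $M$, and show $T$ is perfect. I will prove the contrapositive of the perfectness condition: if $M\yalike_T N$, then $M\yalike N$. By hypothesis, $N$ is bi-similar to $M$, so there exist simulations $R\ysse S\times Q$ of $M$ by $N$ and $L\ysse Q\times S$ of $N$ by $M$. Then I need to argue that bi-similarity already forces full alikeness (that is, alikeness with $C=\yias$). To see this, take any $\sigma\in U(s_0)$: Fact~\ref{fact:simul} applied to $(s_0,q_0)\in R$ produces a unique $t\in Q$ with $q_0\ydi{\sigma} t$, giving $\sigma\in U(q_0)$; and symmetrically via $L$, any $\sigma\in U(q_0)$ lies in $U(s_0)$. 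Hence $U(s_0)=U(q_0)$, so $U(s_0)\ysyd U(q_0)=\yemp$ by Remark~\ref{rem:dif-sim}. For the output condition, if $s_0\ydio{\sigma}{\omega} r$ in $M$, an induction on $|\sigma|$ using Definition~\ref{simulation} shows that $q_0\ydio{\sigma}{\omega} t$ in $N$ with the same output word, so $\lambda(s_0,\sigma)=\tau(q_0,\sigma)$. This yields $M\yalike N$, as required.

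For the necessity direction, assume $T$ is perfect and fix a $T$-alike FSM $N$. By the contrapositive of perfectness, $M\yalike N$, so in particular $U(s_0)=U(q_0)$ and $\lambda(s_0,\alpha)=\tau(q_0,\alpha)$ for all $\alpha\in U(s_0)$. I will construct the two required simulations explicitly by tracing paths from the initial states: define
\[
R=\bigl\{(s,q)\in S\times Q \,\big|\, \exists\,\alpha\in\yias \text{ with } s_0\ydi{\alpha} s \text{ and } q_0\ydi{\alpha} q\bigr\},
\]
and define $L\ysse Q\times S$ symmetrically. Clearly $(s_0,q_0)\in R$ via $\alpha=\yeps$. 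For the simulation step, suppose $(s,q)\in R$ is witnessed by $\alpha$, and let $s\ydio{x}{a} s'$ in $M$. Then $\alpha x\in U(s_0)=U(q_0)$, so there exist $q'\in Q$ and $a'\in\yoa'$ with $q\ydio{x}{a'} q'$. Since $M\yalike N$, the outputs $\lambda(s_0,\alpha x)$ and $\tau(q_0,\alpha x)$ agree symbol-by-symbol, and determinism forces $a=a'$. The witness $\alpha x$ gives $(s',q')\in R$. The verification that $L$ is a simulation of $N$ by $M$ is identical, again using $U(s_0)=U(q_0)$ and the output-equality clause of $M\yalike N$.

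The main obstacle, if any, is conceptual rather than technical: in the sufficiency direction one must recognize that bi-similarity, as defined, gives only that $M$ simulates $N$ and vice-versa, and one still has to unwind this into the pair of conditions ($U$-equality and output-equality) required by Definition~\ref{disting-equiv-block} to conclude $M\yalike N$. Once that is noticed, both directions are essentially immediate by tracing paths through the simulation relations and invoking Fact~\ref{fact:simul}.
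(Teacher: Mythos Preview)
The paper does not actually supply its own proof of Theorem~\ref{necessity-sufficiency-block}: it is stated there as a result imported from~\cite{BonifacioSEFM2014}. That said, your argument is correct, and it is essentially the same approach the paper later uses for the $m$-bounded analogue (Theorem~\ref{necessity-sufficiency-perfect}, via Lemmas~\ref{alike-bisimulation} and~\ref{bisimulation-m-perfect}): in the necessity direction, define the simulation $R$ as the set of pairs $(s,q)$ reachable from $(s_0,q_0)$ by a common input word and verify the simulation clause using $U(s_0)=U(q_0)$ and output agreement; in the sufficiency direction, unwind bi-similarity into $U(s_0)=U(q_0)$ plus output equality along every common input word, yielding $M\yalike N$. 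The only cosmetic difference is that the paper organizes the sufficiency direction as a proof by contradiction with an explicit case split on the two ways $M\ynalike N$ can fail, whereas you argue the contrapositive directly; the content is the same.
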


In the next section we show that the bi-similarity test can be exchanged for an isomorphism test.

\section{Perfectness  and Isomorphism}\label{isomorphism}

In this section we characterize perfectness  in terms of isomorphisms between FSMs. 

\subsection{Bi-simulation and isomorphism}
Two FSMs are said to be isomorphic when they specify exactly the same model, except for a state relabeling.

\begin{defin}
Let $M$ and $N$ be FSMs with $\yoa=\yoa'$.
An \emph{isomorphism} (of $M$ into $N$) is a bijection $\yfun{f}{S}{Q}$ such that 
\begin{enumerate}
\item $f(s_0)=q_0$; and
\item $s\ydio{x}{a} r$ in $M$ if and only if $f(s)\ydio{x}{a} f(r)$ in $N$,
for all $x\in\yia$, $a\in\yoa$. 
\end{enumerate}
Machines $M$ and $N$ are \emph{isomorphic} iff there is an isomorphism of $M$ into $N$.\yfim
\end{defin}

\begin{remark}\label{remk:isomorf}
Let $M$ and $N$ be FSMs.
The following are immediate consequences:
\begin{enumerate}
\item $f$ is an isomorphism of $M$ into $N$ if and only if $f^{-1}$ is an isomorphism 
of $N$ into $M$.
\item Any isomorphism of $M$ into $N$ is also a simulation of $M$ by $N$.
\end{enumerate}
\end{remark}

The first half of the characterization is easily obtained.
\begin{lemma}\label{lema:iso-to-bisim}
Let $M$ and $N$ be isomorphic FSMs. Then, $M$ and $N$ are bi-similar.
\end{lemma}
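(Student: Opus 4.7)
The plan is to exploit the two observations already collected in Remark~\ref{remk:isomorf} and essentially repackage them in the bi-simulation vocabulary of Definition~\ref{simulation}. First I would take any isomorphism $f:S\to Q$ of $M$ into $N$ and consider its graph $R_1=\{(s,f(s))\mid s\in S\}\subseteq S\times Q$ as a candidate simulation of $M$ by $N$. The condition $f(s_0)=q_0$ immediately gives $(s_0,q_0)\in R_1$, and the ``only if'' half of the isomorphism definition supplies exactly the transfer property demanded by Definition~\ref{simulation}: whenever $(s,q)\in R_1$ (so $q=f(s)$) and $s\ydio{x}{a}r$ in $M$, we get $f(s)\ydio{x}{a}f(r)$ in $N$ with $(r,f(r))\in R_1$ by construction. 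This is essentially the content of Remark~\ref{remk:isomorf}(2), which I would cite rather than reprove.

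Next I would invoke Remark~\ref{remk:isomorf}(1): since $f^{-1}:Q\to S$ is itself an isomorphism of $N$ into $M$, the same argument applied to $f^{-1}$ yields a simulation $R_2=\{(q,f^{-1}(q))\mid q\in Q\}\subseteq Q\times S$ of $N$ by $M$. With both $R_1$ and $R_2$ in hand, Definition~\ref{simulation} delivers the bi-similarity of $M$ and $N$.

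I do not anticipate any real obstacle here; the entire content of the lemma has already been absorbed into the two items of Remark~\ref{remk:isomorf}, so the proof reduces to naming the two relations and observing that each satisfies the simulation clause. The only thing worth being explicit about is that the second simulation must go from $Q$ to $S$ (not the other way around), which is why one must pass through $f^{-1}$ rather than try to reuse $f$ directly.
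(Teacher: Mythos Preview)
Your proposal is correct and follows essentially the same approach as the paper, which simply invokes Remark~\ref{remk:isomorf} to obtain simulations in both directions. You have merely been more explicit in naming the relations $R_1$ and $R_2$ and in separating the roles of items~(1) and~(2) of that remark.
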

\begin{proof}
Using Remark~\ref{remk:isomorf}, we have a simulation of $M$ by $N$, and vice-versa.\yfim
\end{proof}

Now let $M$ and $N$ be bi-similar. 
It is clear that if all states in $M$ are unlike, but $N$ has two distinct states that are alike, then it is possible for $M$ and $N$ not to be isomorphic, since these two distinct equivalent states in $N$ would have to correspond to a single state in $M$.
Machines illustrated in Figures~\ref{fsm-n1} and~\ref{fsm-spec} are a case in point. 
\begin{figure}[htb]
\center

\begin{tikzpicture}[node distance=1cm, auto,transform shape]
  \node[ inner sep=1pt, punkt] (q0) {$q_0$};
  \node[punkt, inner sep=1pt,right=2cm of q0] (q1) {$q_1$};
  \node[punkt, inner sep=1pt,right=2cm of q1] (q2) {$q_2$};

  \path (q0)   edge[pil,loop above] 
                	node[above]{$0/1$} (q0);
\path (q0)    edge [ pil, left=20]
                	node[anchor=north,above]{$1/1$} (q1);
\path (q1)    edge [ pil, bend left=20]
                	node[above]{$0/0$} (q2);
\path (q2)    edge [ pil, bend left=20]
                	node[below]{$0/0$} (q1);
\end{tikzpicture}
\caption{FSM $N_1$.}
\label{fsm-n1}
\end{figure} 
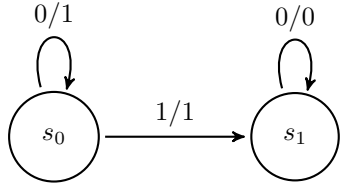
\begin{figure}[htb]
\center

\begin{tikzpicture}[node distance=1cm, auto,transform shape]
  \node[ punkt] (q0) {$s_0$};
  \node[punkt, inner sep=3pt,right=2cm of q0] (q1) {$s_1$};
  \path (q0)   edge[pil, loop above]
                	node[anchor=left,above]{$0/1$} (q0);
\path (q0)    edge [ pil]
                	node[anchor=north,above]{$1/1$} (q1);
\path (q1)    edge [pil,loop above] 
                	node[anchor=left,above]{$0/0$} (q1);
\end{tikzpicture}
\caption{Specification FSM $M$.}
\label{fsm-spec}
\end{figure}
The problem, of course, is that states $q_1$ and $q_2$ in $N_1$ have exactly the same 
blocking input sequences and, moreover, the 
behaviors of $q_1$ and $q_2$ in $N_1$ are exactly the same
under any input sequence $\sigma$ that is non-blocking for both of them.

In the classical sense, a FSM $M$ is reduced if every pair of distinct states in $S$ are distinguishable.
When treating partial FSM, however, we need also to take into consideration blocking input sequences. 
In order to differentiate from the classical notion of reduction in FSMs, we name reduction in the presence of blocking sequences as $p$-reduction.
Both definitions are very similar.

\begin{defin}
A FSM $M$ is \emph{reduced}
iff every pair of distinct states of $S$ are distinguishable, and
for all state $s\in S$ there is a $\sigma\in\yias$ with $\delta(s_0,\sigma)=s$. \yfim
\end{defin}

\begin{defin}\label{def:p-reduced}
A FSM $M$ is $p$-reduced iff any no two distinct states in  $M$ are alike and, moreover,
for all $s\in S$ there is $\alpha\in\yias$ with $\delta(s_0,\alpha)=s$. 
\yfim
\end{defin}
Hence, for any two distinct states $s$ and $r$ in $M$ there is an input sequence that is a 
blocking sequence for one of them and is not blocking for the other, or there is an 
input sequence that is non-blocking for both $s$ and $r$ but yields different behaviors
when starting at the two.  
Returning to Figures~\ref{fsm-n1} and~\ref{fsm-spec}, we see that the presence of $q_1$ and $q_2$ in $N_1$ shows that it is not a $p$-reduced FSM.
\begin{remark}\label{rem:reduced}
If $M$ is a reduced FSM with at least two reachable states, then
there always exists a transition out of any reachable state $s$, that is $(s,x)\in D$ for some $x\in\yia$.
Otherwise, $s$ could not be distinguished from any other reachable state in $M$.
\end{remark}

We proceed to show, by a series of simple facts, that if $M$ and $N$ are bi-similar and $p$-reduced, then they are isomorphic.
We start by noting that the bi-similarity condition gives two simulation relations $R\ysse S\times Q$ and $L\ysse Q\times S$.
Define a relation $f\ysse S\times Q$ as follows:
$$(s,q)\in f\quad\text{iff\quad $s_0\ydi{\alpha} s$ and $q_0\ydi{\alpha} q$, for some $\alpha\in\yias$}.$$

\begin{fact}\label{fact:RL}
If $(s,q)\in f$ then $(s,q)\in R$ and $(q,s)\in L$.
\end{fact}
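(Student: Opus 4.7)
The plan is to invoke Fact~\ref{fact:bisimul} directly. Suppose $(s,q)\in f$, so by the definition of $f$ there exists some $\alpha\in\yias$ such that $s_0\ydi{\alpha} s$ in $M$ and $q_0\ydi{\alpha} q$ in $N$.

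Since $R$ is a simulation of $M$ by $N$ and $L$ is a simulation of $N$ by $M$, we have $(s_0,q_0)\in R$ and $(q_0,s_0)\in L$ by Definition~\ref{simulation}. Now apply Fact~\ref{fact:bisimul} to the pair of starting states $(s_0,q_0)$ and the input sequence $\alpha$ with $\delta(s_0,\alpha)=s$: this produces a unique $t\in Q$ such that $\mu(q_0,\alpha)=t$, $(s,t)\in R$, and $(t,s)\in L$.

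To finish, I would use determinism of $N$. Since we already have $q_0\ydi{\alpha} q$ in $N$ and also $q_0\ydi{\alpha} t$ from Fact~\ref{fact:bisimul}, determinism forces $t=q$. Hence $(s,q)\in R$ and $(q,s)\in L$, as claimed. There is no real obstacle here; the fact is essentially a rephrasing of Fact~\ref{fact:bisimul} together with a single appeal to the determinism of $N$ to identify the witness state supplied by that fact with the given $q$.
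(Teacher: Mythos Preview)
Your proof is correct and follows essentially the same approach as the paper's. The only cosmetic difference is that the paper invokes Fact~\ref{fact:simul} directly (once for $R$, then a symmetric argument for $L$), whereas you invoke Fact~\ref{fact:bisimul}, which already bundles both applications of Fact~\ref{fact:simul} together; the underlying reasoning is identical.
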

\begin{proof}
Observe that $(s,q)\in f$ gives $s_0\ydi{\alpha} s$ and $q_0\ydi{\alpha} q$.
Since $(s_0,q_0)\in R$, Fact~\ref{fact:simul} gives $q_0\ydi{\alpha} p$ and $(s,p)\in R$, for some $p\in Q$. Since $N$ is deterministic, we get $p=q$, and so $(s,q)\in R$.
A symmetric argument gives $(q,s)\in L$. \yfim
\end{proof}

Now we show that $f$ is, in fact, a bijection. This will establish that $M$ and $N$ are isomorphic, when they are $p$-reduced.

\begin{description}
\item[\it $f$ is a function:] Let $(s,q_i)\in f$, $i=1,2$.
From Fact~\ref{fact:RL} we obtain $(s,q_i)\in R$ and $(q_i,s)\in L$, $i=1,2$. 
Using Lemma~\ref{lema:RL}, we conclude that $q_1\yalike q_2$.
Because $N$ is $p$-reduced, Definition~\ref{def:p-reduced} forces $q_1=q_2$. 

\item[\it $f$ is total:] Let $s\in S$. Since $M$ is $p$-reduced,
Definition~\ref{def:p-reduced} gives $\alpha\in\yias$ such that $s_o\ydi{\alpha} s$.
Since $(s_0,q_0)\in R$, Fact~\ref{fact:simul} gives $q_0\ydi{\alpha} q$ for some
$q\in Q$.
Thus, $(s,q)\in f$.

\item[\it $f$ is onto:] Let $q\in Q$. Since $N$ is $p$-reduced,
Definition~\ref{def:p-reduced} gives $\alpha\in\yias$ such that $q_o\ydi{\alpha} q$.
Since $(q_0,s_0)\in L$, Fact~\ref{fact:simul} gives $s_0\ydi{\alpha} s$ for some
$s\in S$.
Thus, $(s,q)\in f$.

\item[\it $f$ is one-to-one:] Let $(s_i,q)\in f$, $i=1,2$. 
Using Fact~\ref{fact:RL} we get $(s_i,q)\in R$ and $(q,s_i)\in L$, $i=1,2$. 
Then Lemma~\ref{lema:RL} gives  $s_1\yalike s_2$.
Thus $s_1=s_2$, since $M$ is $p$-reduced.

\item[\it $f$ is a bijection:] We have shown that $f$ is a total function, which is also
onto and injective.
\end{description}

We can now state the main result of this section.
\begin{theorem}\label{completeness-isomorphim}
Let $M$ and $N$ be $p$-reduced FSMs.
Then, $M$ and $N$ are bi-similar if and only if $M$ and $N$ are isomorphic.
\end{theorem}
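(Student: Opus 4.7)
The plan is to note that the ($\Leftarrow$) direction is already handled by Lemma~\ref{lema:iso-to-bisim}, so all the work is in proving ($\Rightarrow$). Moreover, the bulk of the work for that direction is already done in the discussion immediately preceding the theorem statement: the relation $f\ysse S\times Q$ defined by $(s,q)\in f$ iff $s_0\ydi{\alpha} s$ and $q_0\ydi{\alpha} q$ for some $\alpha\in\yias$ was shown to be a bijection, using Facts~\ref{fact:simul} and~\ref{fact:RL}, Lemma~\ref{lema:RL}, and the $p$-reducedness of both machines. So the proof essentially consists of collecting those bulleted arguments and then verifying that this bijection is in fact an isomorphism.

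The remaining work is to check the two defining properties of isomorphism. First, $f(s_0)=q_0$ follows immediately by taking $\alpha=\yeps$, since $s_0\ydi{\yeps} s_0$ and $q_0\ydi{\yeps} q_0$. Second, I would show that $s\ydio{x}{a} r$ in $M$ iff $f(s)\ydio{x}{a} f(r)$ in $N$, for all $s,r\in S$, $x\in\yia$, $a\in\yoa$. For the forward implication, given $(s,f(s))\in f$ we have $(s,f(s))\in R$ by Fact~\ref{fact:RL}; since $R$ is a simulation and $s\ydio{x}{a} r$, there is $q'\in Q$ with $f(s)\ydio{x}{a} q'$ and $(r,q')\in R$. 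If $s_0\ydi{\alpha}s$ and $q_0\ydi{\alpha}f(s)$ witness $(s,f(s))\in f$, then $s_0\ydi{\alpha x}r$ and $q_0\ydi{\alpha x}q'$ witness $(r,q')\in f$; since $f$ is a function, $q'=f(r)$.

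For the reverse implication, I would use the symmetric simulation $L\ysse Q\times S$. By Fact~\ref{fact:RL}, $(f(s),s)\in L$. Given $f(s)\ydio{x}{a} q'$ in $N$, the simulation $L$ provides some $r\in S$ with $s\ydio{x}{a} r$ in $M$ and $(q',r)\in L$. Applying the forward direction just established, $f(s)\ydio{x}{a} f(r)$ in $N$, and the determinism of $N$ forces $f(r)=q'$, as required.

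I do not expect a real obstacle: the delicate content (transitivity of bi-similarity up to alikeness, and the argument that alike states must coincide in a $p$-reduced FSM) has already been packaged in Lemma~\ref{lema:RL} and Definition~\ref{def:p-reduced}. The only point worth noting is that the proof uses determinism of both $M$ and $N$ twice — once in the bullet showing that $f$ is a function (collapsing two candidate images to one via $p$-reducedness plus the uniqueness provided by Fact~\ref{fact:simul}), and once in the reverse transition preservation argument above.
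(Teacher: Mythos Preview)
Your proposal is correct and follows the same approach as the paper: cite Lemma~\ref{lema:iso-to-bisim} for ($\Leftarrow$), and for ($\Rightarrow$) invoke the bulleted discussion preceding the theorem that shows the relation $f$ is a bijection. In fact you go a step further than the paper, which simply writes ``the argument just given establishes the converse'' --- the paper's bulleted items only verify that $f$ is a bijection and never explicitly check the transition-preservation clause of the isomorphism definition, whereas you supply that verification via Fact~\ref{fact:RL}, the simulations $R$ and $L$, and determinism. That addition is a genuine (if routine) completion of the paper's argument.
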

\begin{proof}
If $M$ and $N$ are isomorphic then they are bi-similar by Lemma~\ref{lema:iso-to-bisim}.
The argument just given establishes the converse.
\yfim
\end{proof}

The next corollary exposes a strong relationship between perfectness of a test suite $T$ for a FSM $M$ and $p$-reduced FSMs that are $T$-alike to $M$.  
\begin{corollary}\label{cor:perfect}
Let $M$ be a $p$-reduced FSM and $T$ be a test suite for $M$. 
If $T$ is perfect for $M$ then any $p$-reduced $T$-alike FSM is isomorphic to $M$. 
\end{corollary}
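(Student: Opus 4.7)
The plan is to chain together the two major theorems that have already been established in this section and the previous one. Specifically, Theorem~\ref{necessity-sufficiency-block} tells us that perfectness of $T$ is equivalent to the condition that every $T$-alike FSM is bi-similar to $M$, and Theorem~\ref{completeness-isomorphim} tells us that, among $p$-reduced FSMs, bi-similarity coincides with isomorphism. The corollary sits exactly at the intersection of these two facts.

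Concretely, I would start by fixing an arbitrary $p$-reduced FSM $N$ that is $T$-alike to $M$; the goal is to show $N$ is isomorphic to $M$. First, I would invoke the hypothesis that $T$ is perfect for $M$: by Theorem~\ref{necessity-sufficiency-block}, since $N$ is $T$-alike to $M$, we have that $N$ is bi-similar to $M$. Second, I would invoke the assumption that $M$ is $p$-reduced together with the assumption that $N$ is $p$-reduced, so that Theorem~\ref{completeness-isomorphim} applies; this immediately upgrades the bi-similarity between $M$ and $N$ to an isomorphism. Since $N$ was arbitrary among $p$-reduced $T$-alike FSMs, the corollary follows.

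There is essentially no obstacle here: the two hypotheses of Theorem~\ref{completeness-isomorphim} ($M$ and $N$ both $p$-reduced, $M$ and $N$ bi-similar) are exactly matched by the hypotheses of the corollary together with the conclusion obtained from Theorem~\ref{necessity-sufficiency-block}. The only thing I would be careful about is making clear which assumption provides the $p$-reducedness of each machine ($M$'s comes from the corollary hypothesis, $N$'s from the quantifier ``any $p$-reduced $T$-alike FSM''), so that the reader sees both sides of Theorem~\ref{completeness-isomorphim} are legitimately instantiated. No new machinery, lemmas, or case analysis should be needed.
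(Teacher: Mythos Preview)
Your proposal is correct and follows essentially the same approach as the paper: fix a $p$-reduced $T$-alike FSM $N$, apply Theorem~\ref{necessity-sufficiency-block} to obtain bi-similarity, then apply Theorem~\ref{completeness-isomorphim} to upgrade to isomorphism. The paper's proof is terser but the logical structure is identical.
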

\begin{proof}
Assume that $T$ is perfect for $M$ and  let $N$ be a $p$-reduced FSM that is $T$-alike $M$.
By Theorem~\ref{necessity-sufficiency-block}, we know that $N$ is bi-similar to $M$.
Then, $M$ and $N$ are isomorphic, using Theorem~\ref{completeness-isomorphim}.
\yfim
\end{proof}

\subsection{$p$-reduced Finite State Machines}

The converse of Corollary~\ref{cor:perfect} actually also holds.
But, since  Theorem~\ref{completeness-isomorphim} stipulates that \emph{all} $T$-alike FSMs must simulate the specification $M$, first we must show that any FSM can be $p$-reduced
without loosing the $T$-alikness property.

Recall from Lemma~\ref{lem:alike-trans} that $\yalike$ is an equivalence relation on $S$ on $M$.
We denote by $[s]$ the equivalence class of $s$ under the relation $\yalike$.
We now use the classical idea of taking quotients in order to construct a FSM 
$\ywb{M}$
that is $p$-reduced and alike to $M$.  Define 
$$\ywb{S}=\{[s]\yst s\in S, \text{and $s\ydio{\alpha}{\omega}$, some $\alpha\in\yias$,
$\omega\in\yoas$}\},$$ and
$\ywb{s_0}=[s_0]$.
Next, if $s\yalike r$ and $(s,x)\in D$, then Lemma~ \ref{lem:alike}(1) gives $(r,x)\in D$. 
We can then define $\ywb{D}=\big\{\big(\,[s],x\,\big)\big| (s,x)\in D\big\}$.
Since $([s],x)\in \ywb{D}$ implies $(s,x)\in D$, and Lemma~ \ref{lem:alike}(1), again,
would give $\delta(s,x)\yalike \delta(r,x)$ for all $r\in[s]$, we can define
$\ywb{\delta}\big([s],x\big)=\big[\delta(s,x)\big]$.
Finally, note that if $s\yalike r$ and $s\ydio{x}{a}p$, for some $p\in S$, $x\in\yia$ and $a\in\yoa$, then Lemma~ \ref{lem:alike}(1) gives $r\ydio{x}{a} q$, for some $q\in S$, that is,
$\lambda(s,x)=\lambda(r,x)$ whenever $s\yalike r$ and $x\in U(s)$.
Thus, we can define $\ywb{\lambda}\big([s],x\big)=\lambda(s,x)$. The construction of 
$\ywb{M}$ is complete.

\begin{defin}\label{M:p-reduced}
Let $M$ be a FSM.
Then $\ywb{M}=(\ywb{S},\ywb{s_0},\yia,\yoa,\ywb{\ydom},\ywb{\ydel},\ywb{\ylam})$ is the FSM
given by the preceding construction. 
\end{defin}

The foregoing construction satisfy a number of simple properties that will be useful later.
\begin{fact}\label{fact:m-to-equiv}
Let $s,r\in S$, and let $\alpha\in\yias$, $\omega\in\yoas$.
If $s\ydio{\alpha}{\omega}r$, then $[s]\ydio{\alpha}{\omega}[r]$.
\end{fact}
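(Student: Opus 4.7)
The plan is a straightforward induction on $|\alpha|\geq 0$, exploiting the fact that the construction of $\ywb{M}$ was defined precisely so that each single-step transition of $M$ lifts to a single-step transition in $\ywb{M}$ between the corresponding equivalence classes.

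For the base case $|\alpha|=0$, we have $\alpha=\yeps$, $\omega=\yeps$ and $s=r$, so $[s]=[r]$ and $[s]\ydio{\yeps}{\yeps}[r]$ holds trivially in $\ywb{M}$.

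For the inductive step, I would write $\alpha=\alpha' x$ and $\omega=\omega' a$ with $x\in\yia$, $a\in\yoa$, so that $s\ydio{\alpha'}{\omega'} p\ydio{x}{a} r$ for some intermediate $p\in S$. The induction hypothesis applied to the prefix gives $[s]\ydio{\alpha'}{\omega'}[p]$ in $\ywb{M}$. For the last step, since $p\ydio{x}{a} r$ in $M$ means $(p,x)\in\ydom$, the construction yields $([p],x)\in\ywb{\ydom}$, with $\ywb{\ydel}([p],x)=[\ydel(p,x)]=[r]$ and $\ywb{\ylam}([p],x)=\ylam(p,x)=a$. Hence $[p]\ydio{x}{a}[r]$ in $\ywb{M}$, and concatenating we obtain $[s]\ydio{\alpha}{\omega}[r]$ in $\ywb{M}$.

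The only real subtlety to guard against is well-definedness of $\ywb{\ydel}$ and $\ywb{\ylam}$ on representatives (i.e., that picking a different $p'\yalike p$ would still yield the same class $[\ydel(p',x)]$ and the same output), but this has already been discharged in the construction using Lemma~\ref{lem:alike}(1), so in the proof I only need to invoke the definitions. No obstacle of substance arises; the statement is essentially a restatement of how $\ywb{M}$ was built, packaged as a multi-step transition.
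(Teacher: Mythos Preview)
Your proposal is correct and follows essentially the same approach as the paper: establish that a single-step transition $s\ydio{x}{a}r$ lifts to $[s]\ydio{x}{a}[r]$ directly from the definitions of $\ywb{\ydom}$, $\ywb{\ydel}$, $\ywb{\ylam}$, and then conclude by an easy induction on $|\alpha|$. The paper merely presents the single-step case and defers the rest to ``an easy induction on $|\alpha|\geq 0$'', whereas you spell out the base case and the decomposition $\alpha=\alpha' x$ explicitly, but the substance is identical.
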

\begin{proof}
Assume that $s\ydio{x}{a}r$, with $x\in\yia$ and $a\in\yoa$. Then $\delta(s,x)=r$ and $\lambda(s,x)=a$.
From the construction of $\ywb{M}$ we get $\delta([s],x)=[r]$ and 
$\ywb{\lambda}([s],x)=a$.
Hence, $[s]\ydio{x}{a}[r]$, and the result follows by an easy induction
on $|\alpha|\geq 0$. \yfim
\end{proof} 

\begin{fact}\label{fact:equiv-to-m}
Let $r,q\in S$, and let $\alpha\in\yias$, $\omega\in\yoas$.
If $[r]\ydio{\alpha}{\omega}[q]$, then $r_1\ydio{\alpha}{\omega}q_1$, for
some $r_1, q_1\in S$ with $r\yalike r_1$ and
$q\yalike q_1$.
\end{fact}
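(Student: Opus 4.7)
The plan is to proceed by induction on $|\alpha|\geq 0$, following the same structural pattern used in Fact~\ref{fact:m-to-equiv} but going in the reverse direction. The core idea is that while $[r]\ydio{\alpha}{\omega}[q]$ in $\ywb{M}$ only records the equivalence class of the reached state, we can always fix $r_1=r$ itself as the starting representative and then use Lemma~\ref{lem:alike}(2) repeatedly to realize the path in $M$ one step at a time, paying for each step by possibly moving within a $\yalike$-class.

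For the base case $|\alpha|=0$, we have $\alpha=\omega=\yeps$ and $[r]=[q]$, so $q\yalike r$. Picking $r_1=q_1=r$ works trivially: $r\ydio{\yeps}{\yeps}r$ in $M$, $r\yalike r_1$, and $q\yalike r=q_1$.

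For the inductive step, write $\alpha=x\alpha'$ and $\omega=a\omega'$, so that $[r]\ydio{x}{a}[p]\ydio{\alpha'}{\omega'}[q]$ for some $[p]\in\ywb{S}$. Unpacking the definitions of $\ywb{D}$, $\ywb{\ydel}$ and $\ywb{\ylam}$ from Definition~\ref{M:p-reduced}, the single-step transition $[r]\ydio{x}{a}[p]$ in $\ywb{M}$ means that $(r,x)\in D$, that $\ylam(r,x)=a$, and that $[\ydel(r,x)]=[p]$; hence setting $p_0=\ydel(r,x)$ we obtain $r\ydio{x}{a}p_0$ in $M$ with $p_0\yalike p$. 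Applying the induction hypothesis to $[p]\ydio{\alpha'}{\omega'}[q]$ yields $p_1,q_1\in S$ with $p_1\ydio{\alpha'}{\omega'}q_1$ in $M$, $p\yalike p_1$, and $q\yalike q_1$. By transitivity of $\yalike$ (Lemma~\ref{lem:alike-trans}) we get $p_0\yalike p_1$, so Lemma~\ref{lem:alike}(2) gives some $q_0\in S$ with $p_0\ydio{\alpha'}{\omega'}q_0$ in $M$ and $q_1\yalike q_0$. Concatenating, $r\ydio{x}{a}p_0\ydio{\alpha'}{\omega'}q_0$, so $r\ydio{\alpha}{\omega}q_0$ in $M$; taking $r_1=r$ and replacing the name $q_1$ of the conclusion by $q_0$ gives $r\yalike r_1$ trivially, and $q\yalike q_1\yalike q_0$ by another application of transitivity, closing the induction.

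I do not expect any genuine obstacle here: the only subtlety is the careful bookkeeping at the inductive step, where the class representative $p_0=\ydel(r,x)$ picked out by the first step in $M$ is generally different from the representative $p_1$ produced by the induction hypothesis, and this mismatch is precisely what Lemma~\ref{lem:alike}(2) is designed to absorb. The fact that we can fix $r_1=r$ throughout simplifies the statement, and one should emphasize in the write-up that the statement is really about paths being realizable in $M$ starting at the given representative $r$, up to $\yalike$-movement at the endpoint.
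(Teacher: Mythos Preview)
Your proof is correct and follows essentially the same approach as the paper: establish the single-step case by unpacking the definitions of $\ywb{D}$, $\ywb{\delta}$, $\ywb{\lambda}$, and then induct on $|\alpha|$, using Lemma~\ref{lem:alike} to reconcile representatives. Your version is in fact slightly tighter than the paper's, since you observe that one may always take $r_1=r$ itself (the paper merely picks some $r_1\yalike r$ witnessing the transition and then argues the output is $a$ via a second representative), but this is a presentational refinement rather than a different route.
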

\begin{proof}
Assume that $[r]\ydio{x}{a}[q]$, with $x\in\yia$ and $a\in\yoa$. Then $\ywb{\delta}([r],x)=[q]$ and $\ywb{\lambda}([r],x)=a$.
From $\ywb{\delta}([r],x)=[q]$, the construction of $\ywb{M}$ gives $r_1, q_1\in S$ with
$\delta(r_1,x)=q_1$, $r_1\yalike r$ and $q_1\yalike q$.
From $\ywb{\lambda}([r],x)=a$, we get $r_2\in S$ with $\lambda(r_2,x)=a$ and $r_2\yalike r$.
Hence, $r_1\yalike r_2$.

Since $r_1\ydio{x}{b} q_1$, this gives $r_2\ydio{x}{b} r_3$, for some $r_3\in S$.
But $\lambda(r_2,x)=a$, and so $a=b$ because machines are deterministic.
Collecting, we have $r_1\ydio{x}{a} q_1$, $r_1\yalike r$ and $q_1\yalike q$.
The result now follows using a simple induction on $|\alpha|$. \yfim
\end{proof} 

\begin{lemma}\label{lem:p-reduced}
Let $M$ be a FSM and $s,r\in S$.
Let $\ywb{M}$ be the FSM in Definition~\ref{def:p-reduced}.
If $[s]\neq [r]$, then $[s]\ynalike [r]$.
\end{lemma}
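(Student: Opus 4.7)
The plan is to argue by contrapositive: assume $[s]\yalike[r]$ in $\ywb{M}$ and derive $s\yalike r$ in $M$, which forces $[s]=[r]$ by the very definition of the equivalence classes under $\yalike$.

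First I would verify that $U(s)=U(r)$ in $M$. Given any $\alpha\in U(s)$, Fact~\ref{fact:m-to-equiv} lifts the run to $[s]\ydi{\alpha}$ in $\ywb{M}$. From $[s]\yalike[r]$ and Remark~\ref{rema:equal-Us}(1) applied inside $\ywb{M}$, one has $U([s])=U([r])$, so $[r]\ydi{\alpha}$. Now Fact~\ref{fact:equiv-to-m} descends this back to $M$, yielding some $r_1\in S$ with $r\yalike r_1$ and $r_1\ydi{\alpha}$. Applying Lemma~\ref{lem:alike}(2) to $r_1\yalike r$ (using symmetry of $\yalike$, cf.\ Lemma~\ref{lem:alike-trans}) produces $r\ydi{\alpha}$, so $\alpha\in U(r)$. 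Swapping the roles of $s$ and $r$ gives the reverse inclusion.

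Next I would check that outputs agree. For $\alpha\in U(s)=U(r)$, write $s\ydio{\alpha}{\omega_s}$ and $r\ydio{\alpha}{\omega_r}$ in $M$. Fact~\ref{fact:m-to-equiv} pushes these runs to $[s]\ydio{\alpha}{\omega_s}$ and $[r]\ydio{\alpha}{\omega_r}$ in $\ywb{M}$. Since $[s]\yalike[r]$ and $\alpha$ is non-blocking for both classes, Definition~\ref{disting-equiv-block} forces $\omega_s=\omega_r$, hence $\lambda(s,\alpha)=\lambda(r,\alpha)$. Combining both halves, $s\yalike r$, so $[s]=[r]$, contradicting the hypothesis.

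The only delicate step is the descent from $\ywb{M}$ back to $M$ via Fact~\ref{fact:equiv-to-m}, which produces a run of $\alpha$ starting not from $r$ itself but from a state merely alike to $r$; Lemma~\ref{lem:alike}(2) is exactly the bridge that closes this gap. Apart from this, the argument is a routine lift-and-descend between a machine and its $\yalike$-quotient, and the machinery already established (Facts~\ref{fact:m-to-equiv} and~\ref{fact:equiv-to-m}, together with Lemma~\ref{lem:alike}) was tailored to make each direction automatic.
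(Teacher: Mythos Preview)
Your proposal is correct and follows essentially the same contrapositive lift-and-descend strategy as the paper's proof: lift runs from $M$ to $\ywb{M}$ via Fact~\ref{fact:m-to-equiv}, exploit $[s]\yalike[r]$ there, and descend via Fact~\ref{fact:equiv-to-m} together with Lemma~\ref{lem:alike}(2). Your handling of the output-equality step is marginally cleaner than the paper's---you lift both runs independently and compare directly in $\ywb{M}$, whereas the paper tracks the single output $\omega$ through the full lift--alike--descend chain---but this is a cosmetic difference, not a different route.
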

\begin{proof}
Assume $[s]\yalike [r]$ and show that $s\yalike r$.
First, we show that $U(s)\ysyd U(r)=\yemp$.
Let $\alpha\in U(s)$.
Then $s\ydio{\alpha}{\omega} p$, for some $p\in S$ and $\omega\in\yoas$.
Using Fact~\ref{fact:m-to-equiv}, we get  $[s]\ydio{\alpha}{\omega} [p]$.
Since $[s]\yalike [r]$, Lemma~\ref{lem:alike-trans} gives $[r]\ydio{\alpha}{\omega}[q]$, for some $[q]\in \ywb{D}$.
Using Fact~\ref{fact:equiv-to-m} we obtain $r_1\ydio{\alpha}{\omega} q_1$, for some $q_1 \in S$ with $r_1\yalike r$.
Hence, Lemma~\ref{lem:alike-trans} now gives $r\ydio{\alpha}{\omega} q_2$, for some $q_2\in S$. 
We conclude that $\alpha\in U(r)$, thus establishing that $U(s)\ysse U(r)$.
A similar argument gives $U(r)\ysse U(s)$, and so $U(s)=U(r)$, as needed.
To finish, let now $\alpha\in U(s)\cap U(r)$.
Then, $s\ydio{\alpha}{\omega} p$, for some $p\in S$. Repeating the preceding argument would
give, again, $r\ydio{\alpha}{\omega} r_2$, for some $r_2\in S$.
Hence, $\lambda(s,\alpha)=\omega=\lambda(r,\omega)$.
From Definition~\ref{disting-equiv-block} we conclude that $s\yalike r$. \yfim
\end{proof} 

At this point, we can already establish that $\ywb{M}$ is $p$-reduced.
\begin{corollary}\label{cor:m-p-rduced}
Let $\ywb{M}$ be the FSM in Definition~\ref{def:p-reduced}.
Then, $\ywb{M}$ is $p$-reduced.
\end{corollary}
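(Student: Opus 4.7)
The plan is to verify that $\ywb{M}$ satisfies both conditions of Definition~\ref{def:p-reduced}: first, that no two distinct states of $\ywb{M}$ are alike; second, that every state of $\ywb{S}$ is reachable from $\ywb{s_0}=[s_0]$ via $\ywb{\delta}$.

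For the first condition, I would simply invoke Lemma~\ref{lem:p-reduced}: given any two distinct states $[s],[r]\in\ywb{S}$, the lemma yields $[s]\ynalike[r]$ immediately. So there is nothing to do beyond citing that result.

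For the second condition, I would take an arbitrary $[s]\in\ywb{S}$. By the way $\ywb{S}$ was defined from reachable states of $M$, there are $\alpha\in\yias$ and $\omega\in\yoas$ with $s_0\ydio{\alpha}{\omega} s$ in $M$. Applying Fact~\ref{fact:m-to-equiv} lifts this transition sequence to $\ywb{M}$, giving $[s_0]\ydio{\alpha}{\omega}[s]$, i.e.\ $\ywb{\delta}([s_0],\alpha)=[s]$. This is precisely the reachability clause of Definition~\ref{def:p-reduced}, and it applies since $\ywb{s_0}=[s_0]$ by construction.

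No real obstacle arises: the substantive content has already been absorbed into Lemma~\ref{lem:p-reduced} (which handles distinctness versus alikeness in the quotient) and Fact~\ref{fact:m-to-equiv} (which handles the lifting of computations from $M$ to $\ywb{M}$). The corollary is essentially a bookkeeping step that packages these two results into the single statement that $\ywb{M}$ meets Definition~\ref{def:p-reduced}.
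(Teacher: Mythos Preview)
Your proposal is correct and matches the paper's approach exactly: invoke Lemma~\ref{lem:p-reduced} for the no-two-distinct-states-alike clause, and lift a reaching computation from $M$ to $\ywb{M}$ for the reachability clause. In fact your citation of Fact~\ref{fact:m-to-equiv} for the lifting step is arguably more apt than the paper's own reference to Lemma~\ref{lem:alike}(2), which appears to be a minor mis-citation there.
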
 
\begin{proof}
Let $[s]\in\ywb{S}$.
By construction, $s_0\ydio{\alpha}{\omega} s$, for some $\alpha\in\yias$, $\omega\in\yoas$.
Hence, Lemma~\ref{lem:alike}(2) gives $\ywb{s_0}\ydio{\alpha}{\omega} [s]$, because
$\ywb{s_0}=[s_0]$.
Further, if $[s]$ and $[r]$ are distinct, Lemma~\ref{lem:p-reduced} implies $[s]\ynalike [r]$.
\yfim
\end{proof}

In the next result, we use the same symbol, $\yalike$, to denote the alikeness relations between states of $M$, and also between states of $M$ and of $\ywb{M}$.
The context will always make clear which relation we are referring to.

\begin{lemma}\label{lem:alike-p-reduced}
Let $M$ be a FSM and $s,r\in S$.
Let $\ywb{M}$ be the FSM in Definition~\ref{def:p-reduced}.
If $s\yalike r$, then $s\yalike [r]$.
\end{lemma}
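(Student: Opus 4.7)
The plan is to establish the lemma by first proving the special case $r \yalike [r]$ (viewed as alikeness between a state of $M$ and a state of $\ywb{M}$), and then invoking the transitivity of $\yalike$ extended to distinct machines, as recorded in Remark~\ref{rema:alike-trans}. Since the hypothesis gives $s \yalike r$ in $M$, and the special case will give $r \yalike [r]$ between $M$ and $\ywb{M}$, chaining these two yields $s \yalike [r]$, which is the desired conclusion.

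To prove the key step $r \yalike [r]$, I would verify both parts of Definition~\ref{disting-equiv-block}. For the equality of input-sequence languages, let $\alpha \in U_M(r)$, so $r \ydio{\alpha}{\omega} p$ in $M$ for some $\omega$ and $p$; by Fact~\ref{fact:m-to-equiv}, $[r] \ydio{\alpha}{\omega} [p]$ in $\ywb{M}$, giving $\alpha \in U_{\ywb{M}}([r])$. Conversely, if $\alpha \in U_{\ywb{M}}([r])$, then $[r] \ydio{\alpha}{\omega}[q]$ for some $\omega$ and $[q]$; Fact~\ref{fact:equiv-to-m} produces $r_1 \ydio{\alpha}{\omega} q_1$ in $M$ with $r \yalike r_1$, and then Lemma~\ref{lem:alike}(2) lifts this back to $r \ydio{\alpha}{\omega} q_2$ for some $q_2 \in S$, so $\alpha \in U_M(r)$. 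Hence $U_M(r) = U_{\ywb{M}}([r])$, and in particular their symmetric difference is empty.

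For the output matching on the common domain, both arguments above already exhibit the same output word $\omega$ on either side: the forward direction gives $\ywb{\lambda}([r],\alpha) = \omega = \lambda(r,\alpha)$ directly from Fact~\ref{fact:m-to-equiv}, and the reverse direction gives the same equality via Fact~\ref{fact:equiv-to-m} together with the determinism of $M$ (which forces the output produced by $r$ on $\alpha$ to coincide with the one produced by $r_1$, and hence with $\omega$). Thus $r \yalike [r]$.

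Finally, with $s \yalike r$ in $M$ and $r \yalike [r]$ viewed as a relation between the states of $M$ and $\ywb{M}$, Remark~\ref{rema:alike-trans} (the transitivity argument of Lemma~\ref{lem:alike-trans} applied across two machines) delivers $s \yalike [r]$. The only mildly delicate point is being explicit about which alikeness relation is being used at each step; this is exactly what the lemma statement flags by overloading the symbol $\yalike$, and the transitivity remark is precisely the tool that legitimises the overloading.
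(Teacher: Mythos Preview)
Your proof is correct. The ingredients you use --- Fact~\ref{fact:m-to-equiv}, Fact~\ref{fact:equiv-to-m}, and Lemma~\ref{lem:alike}(2) --- are exactly those used in the paper's proof, but you organise them differently. The paper argues $s\yalike[r]$ directly: it shows $U(s)\subseteq U([r])$ by passing through $U(r)$ (using $s\yalike r$, then Fact~\ref{fact:m-to-equiv}), shows the reverse inclusion similarly, and then matches outputs by pushing a run $s\ydio{\alpha}{\omega}$ over to $r$ via Lemma~\ref{lem:alike}(2) and on to $[r]$ via Fact~\ref{fact:m-to-equiv}, concluding by determinism of $\ywb{M}$. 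Your version instead isolates the hypothesis-free special case $r\yalike[r]$ and then invokes transitivity (Remark~\ref{rema:alike-trans}) with $s\yalike r$. This decomposition is slightly cleaner conceptually, since $r\yalike[r]$ is really the content of the lemma and the passage from $s$ to $r$ is pure transitivity; the paper's direct argument effectively inlines that transitivity step. Either way the work done is the same.
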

\begin{proof}
We first show that $U(s)\ysyd U([r])=\yemp$.
Let $\alpha\in U(s)$.
Since $s\yalike r$, Lemma~\ref{lem:alike}(2) gives $\alpha\in U(r)$.
Hence, using Fact~\ref{fact:m-to-equiv} we obtain $\alpha\in U([r])$, and so
$U(s)\ysse U([r])$.
Conversely, let $\alpha\in U([r])$.
Then, Fact~\ref{fact:equiv-to-m} gives $\alpha\in U(r_1)$, where $r_1\yalike r$.
Thus, $r_1\yalike s$, and so using Lemma~\ref{lem:alike}(2) we get $\alpha\in U(s)$.
This shows $U([r])\ysse U(s)$ and we may conclude that $U(s)=U([r])$.
Hence, $U(s)\ysyd U([r])=\yemp$ using Remark~\ref{rem:dif-sim}, as desired.

Now, let $\alpha\in U(s)\cap U([r])$.
Then, $s\ydio{\alpha}{\omega} s_1$, for some $s_1\in S$, $\omega\in \yoas$, and also
$[r]\ydio{\alpha}{\rho} [r_1]$, for some $[r_1]\in \ywb{S}$, $\rho\in\yoas$.
In order to get $\lambda(s,\alpha)=\ywb{\lambda}([r],\alpha)$ we just show that
$\omega=\rho$.
From $s\yalike r$, and using Lemma~\ref{lem:alike}(2), we have $r\ydio{\alpha}{\omega} r_2$,
for some $r_2\in S$ with $r_2\yalike s_1$.
Hence, by Fact~\ref{fact:m-to-equiv} we get $[r]\ydio{\alpha}{\omega} [r_2]$.
The determinism of $\ywb{M}$ now gives $\omega=\rho$. \yfim
\end{proof} 

We can now say that the $p$-reduction construction preserves alikeness.

\begin{corollary}\label{cor:m-alike-m}
Let $M$ be a FSM and let $\ywb{M}$ be the FSM in Definition~\ref{def:p-reduced}.
Then, $M\yalike \ywb{M}$.
\end{corollary}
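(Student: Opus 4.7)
The plan is to unpack the definition of alikeness between machines (Definition~\ref{disting-equiv-block}) and reduce to a statement already proved in the preceding lemma. By definition, $M\yalike \ywb{M}$ means $s_0\yalike \ywb{s_0}$; but by construction $\ywb{s_0}=[s_0]$, so it suffices to establish $s_0\yalike [s_0]$, interpreted as an alikeness between a state of $M$ and a state of $\ywb{M}$.

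To do this, I would first invoke reflexivity of $\yalike$ on $S$ (which is part of Lemma~\ref{lem:alike-trans}) to note the trivial fact $s_0\yalike s_0$. Then I would apply Lemma~\ref{lem:alike-p-reduced} with the choice $s=r=s_0$: that lemma precisely says that if $s\yalike r$ holds between two states of $M$, then $s\yalike [r]$ holds as an alikeness between a state of $M$ and a state of $\ywb{M}$. Plugging in, we obtain $s_0\yalike [s_0]=\ywb{s_0}$, which by Definition~\ref{disting-equiv-block} is exactly $M\yalike\ywb{M}$.

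There is no real obstacle here; the technical content has already been absorbed by Lemma~\ref{lem:alike-p-reduced}, whose proof matched the $U$-sets and output sequences between a state $s$ of $M$ and the equivalence class $[r]$ in $\ywb{M}$. The corollary is essentially the instantiation of that lemma at the initial state, recast as a statement about the machines themselves via the convention that $M\yalike N$ abbreviates $s_0\yalike q_0$.
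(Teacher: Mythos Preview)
Your proposal is correct and follows exactly the same approach as the paper's own proof: use reflexivity to get $s_0\yalike s_0$, invoke Lemma~\ref{lem:alike-p-reduced} with $s=r=s_0$ to obtain $s_0\yalike[s_0]$, and conclude via $\ywb{s_0}=[s_0]$. The paper's version is simply terser, omitting the explicit appeal to Lemma~\ref{lem:alike-trans} for reflexivity.
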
 
\begin{proof}
Since $s_0\yalike s_0$, Lemma~\ref{lem:alike-p-reduced} gives $s_0\yalike [s_0]$, and we
know that, by construction,  $\ywb{s_0}=[s_0]$.
\end{proof}

Besides preserving alikeness, the construction also yield bi-simulating machines. 

\begin{lemma}\label{lem:alike-bisim}
Let $M$ be a FSM and let $\ywb{M}$ be the FSM in Definition~\ref{def:p-reduced}.
Then, $M$ and $\ywb{M}$ are bi-similar.
\end{lemma}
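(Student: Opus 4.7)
The plan is to exhibit two explicit simulation relations: one of $M$ by $\ywb{M}$, and one of $\ywb{M}$ by $M$. The construction of $\ywb{M}$ is tailor-made so that transitions commute with the quotient map $s\mapsto [s]$, so most of the work has already been packaged into Facts~\ref{fact:m-to-equiv} and~\ref{fact:equiv-to-m}, plus Lemma~\ref{lem:alike}.

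For the first direction I would define
\[
R=\{(s,[s])\,|\,s\in S,\ [s]\in\ywb{S}\}\ysse S\times\ywb{S}.
\]
Since $s_0\ydio{\yeps}{\yeps} s_0$, the class $[s_0]=\ywb{s_0}$ belongs to $\ywb{S}$, so $(s_0,\ywb{s_0})\in R$. If $(s,[s])\in R$ and $s\ydio{x}{a} p$ in $M$, then $p$ is reachable (via any witness for $s$), so $[p]\in\ywb{S}$, and Fact~\ref{fact:m-to-equiv} gives $[s]\ydio{x}{a}[p]$ in $\ywb{M}$ with $(p,[p])\in R$. This verifies Definition~\ref{simulation}.

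For the reverse direction I would define
\[
L=\{([r],s)\,|\,[r]\in\ywb{S},\ s\in S,\ s\yalike r\}\ysse \ywb{S}\times S.
\]
Reflexivity of $\yalike$ (Lemma~\ref{lem:alike-trans}) gives $([s_0],s_0)\in L$, i.e.\ $(\ywb{s_0},s_0)\in L$. Now suppose $([r],s)\in L$ and $[r]\ydio{x}{a}[q]$ in $\ywb{M}$. Apply Fact~\ref{fact:equiv-to-m} to obtain states $r_1,q_1\in S$ with $r_1\ydio{x}{a} q_1$ in $M$, $r_1\yalike r$, and $q_1\yalike q$. From $s\yalike r$ and $r_1\yalike r$, transitivity of $\yalike$ (Lemma~\ref{lem:alike-trans}) yields $s\yalike r_1$. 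Then Lemma~\ref{lem:alike}(1) lifts $r_1\ydio{x}{a} q_1$ to a transition $s\ydio{x}{a} p$ in $M$ with $p\yalike q_1$. A final use of transitivity combines $p\yalike q_1$ with $q_1\yalike q$ to get $p\yalike q$, hence $([q],p)\in L$. This verifies that $L$ is a simulation of $\ywb{M}$ by $M$.

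The main obstacle is the second direction, since we must reverse the quotient: a transition $[r]\ydio{x}{a}[q]$ in $\ywb{M}$ does not automatically correspond to a transition out of $s$ itself in $M$, only out of some representative equivalent to $r$. This is precisely what Fact~\ref{fact:equiv-to-m} and Lemma~\ref{lem:alike}(1) are designed to handle, so the argument reduces to chaining these two results together with transitivity of $\yalike$. Once both $R$ and $L$ are shown to be simulations, bi-similarity of $M$ and $\ywb{M}$ follows by Definition~\ref{simulation}.
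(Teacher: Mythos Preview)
Your proof is correct and follows essentially the same approach as the paper: both exhibit the pair of simulations built from the alikeness relation, and both rely on Fact~\ref{fact:m-to-equiv}, Fact~\ref{fact:equiv-to-m}, and Lemma~\ref{lem:alike}(1) in exactly the way you describe. The only cosmetic difference is that the paper writes the forward relation as $R=\{(s,[r])\mid s\yalike r\}$ (and then invokes Lemma~\ref{lem:alike}(1) before Fact~\ref{fact:m-to-equiv}), whereas you use the equivalent $R=\{(s,[s])\}$ and apply Fact~\ref{fact:m-to-equiv} directly---a harmless streamlining, since $s\yalike r$ forces $[r]=[s]$.
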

\begin{proof}
Define the relation $R\ysse S\times \ywb{S}$ by letting $(s,[r])\in R$ iff $s\yalike r$.
Clearly, $(s_0,[s_0])\in R$.
Now, let $(s,[r])\in R$ with $s\ydio{x}{a} p$ for some $p\in S$, $x\in\yia$, $a\in\yoa$.
Since $s\yalike r$, Lemma~\ref{lem:alike}(1) gives $r\ydio{x}{a} q$ for some $q\in S$ with 
$q\yalike p$.
Then Fact~\ref{fact:m-to-equiv} gives $[r]\ydio{x}{a}[q]$.
But $(p,[q])\in R$, and we conclude that $R$ is a simulation relation.
For the other direction, define the raletion $L\ysse \ywb{S}\times S$ where 
$([r],s)\in L$ iff $r\yalike s$.
Again $([s_0],s_o)\in L$ clearly holds.
Let $([s],r)\in L$ with $[s]\ydio{x}{a} [q]$ for some $[q]\in\ywb{S}$, $a\in\yoa$, $x\in\yia$.
By Fact~\ref{fact:equiv-to-m}, we get $s_1\ydio{x}{a} q_1$ for some $s_1, q_1\in S$ with
$s\yalike s_1$ and $q\yalike q_1$. 
Since $([r],s)\in L$, we have $s\yalike r$, and so $r\yalike s_1$.
From $s_1\ydio{x}{a} q_1$ we conclude that $r\ydio{x}{a} q_2$, for some $q_2\in S$ with
$q_2\yalike q_1$, using Lemma~\ref{lem:alike}(1).
Thus, $q_2\yalike q$, and so $([q],q_2)\in L$, and we conclude that $L$ is also a simulation relation.  \yfim
\end{proof} 

The desired converse to Corollary~\ref{cor:perfect} can now be established.  

\begin{corollary}\label{cor:reduced-to-perfect}
Let $M$ be a $p$-reduced FSM and let $T$ be a test suite for $M$.
Assume that all $p$-reduced $T$-alike FSMs are isomorphic to $M$.
Then $T$ is perfect for $M$.
\end{corollary}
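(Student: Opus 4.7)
The plan is to reduce everything to the hypothesis by passing through the $p$-reduction construction $\ywb{(\cdot)}$ of Definition~\ref{M:p-reduced}. By Theorem~\ref{necessity-sufficiency-block}, it suffices to show that every $T$-alike FSM $N$ is bi-similar to $M$. So I would start with an arbitrary FSM $N$ with $M\yalike_T N$ and aim to conclude $M$ and $N$ are bi-similar.

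First I would form $\ywb{N}$ from $N$ using the quotient construction of Definition~\ref{M:p-reduced}. Two facts are immediate from the machinery already established: $\ywb{N}$ is $p$-reduced by Corollary~\ref{cor:m-p-rduced}, and $N\yalike \ywb{N}$ by Corollary~\ref{cor:m-alike-m}. Using Remark~\ref{rema:equal-Us}(3), the latter downgrades to $N\yalike_T \ywb{N}$. Combining with the hypothesis $M\yalike_T N$ and invoking the transitivity of $\yalike_C$ across different machines (Remark~\ref{rema:alike-trans}), I obtain $M\yalike_T \ywb{N}$.

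At this point $\ywb{N}$ is a $p$-reduced FSM that is $T$-alike to $M$, so the hypothesis of the corollary applies and gives that $\ywb{N}$ is isomorphic to $M$. By Lemma~\ref{lema:iso-to-bisim}, $M$ and $\ywb{N}$ are bi-similar. On the other hand, Lemma~\ref{lem:alike-bisim} tells us that $N$ and $\ywb{N}$ are bi-similar. Chaining these two bi-similarities through Fact~\ref{fact:sim-trans} (applied in both directions) yields that $M$ and $N$ are bi-similar, as required. A final appeal to Theorem~\ref{necessity-sufficiency-block} closes the argument.

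The main obstacle I anticipate is not any single step, but making sure the transitivity of $\yalike_T$ is legitimately applied across three \emph{different} machines $M$, $N$, $\ywb{N}$; this is precisely what Remark~\ref{rema:alike-trans} was set up for, so I would cite it explicitly. Everything else is a straightforward assembly of the lemmas already proved: the quotient preserves alikeness, yields a $p$-reduced machine bi-similar to the original, and the hypothesis converts $p$-reduced $T$-alikeness into isomorphism, hence into bi-similarity.
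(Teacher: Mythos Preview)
Your proposal is correct and follows essentially the same route as the paper's proof: reduce to Theorem~\ref{necessity-sufficiency-block}, pass to the quotient $\ywb{N}$, use Corollaries~\ref{cor:m-p-rduced} and~\ref{cor:m-alike-m} together with transitivity of $\yalike_T$ to invoke the hypothesis, and then chain bi-similarities via Lemma~\ref{lem:alike-bisim} and Fact~\ref{fact:sim-trans}. The only cosmetic differences are that you cite Remark~\ref{rema:equal-Us}(3) where the paper cites item~(2), and you use Lemma~\ref{lema:iso-to-bisim} directly rather than Theorem~\ref{completeness-isomorphim} to pass from isomorphism to bi-similarity; both of your choices are slightly more direct.
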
 
\begin{proof}
In view of Theorem~\ref{necessity-sufficiency-block}, it suffices to show that any FSM that is $T$-alike to $M$ is also bi-similar to $M$.
Let $N$ be $T$-alike to $M$.
Let $\ywb{N}$ be as in Definition~\ref{def:p-reduced}.
By Corollary~\ref{cor:m-p-rduced} $N$ is $p$-reduced, and by Corollary~\ref{cor:m-alike-m} we have $N\yalike \ywb{N}$.
Now, in view of Remark~\ref{rema:equal-Us}(2) we conclude that $N\yalike_T \ywb{N}$.
Since we already have $M\yalike N$, using Lemma~\ref{lem:alike-trans} and Remark~\ref{rema:alike-trans}, we conclude that $M\yalike \ywb{N}$. 
So, $\ywb{N}$ is $p$-reduced and $T$-alike $M$.
By the hypothesis we know that $M$ and $\ywb{N}$ are isomorphic.
Hence, using Theorem~\ref{completeness-isomorphim}, we know that $M$ and $\ywb{N}$ are
bi-similar.
But $\ywb{N}$ and $N$ are also bi-similar, using Lemma~\ref{lem:alike-bisim}.
Finally, using Fact~\ref{fact:sim-trans}, we conclude that $M$ and $N$ are bi-similar,
as desied. \yfim
\end{proof}

We can now collect the results of this section in the following theorem.
\begin{theorem}
Let $M$ be a $p$-reduced FSM and let $T$ be a test suite for $M$.
Then $T$ is perfect for $M$ iff all $p$-reduced $T$-alike FSMs are isomorphic to $M$.
\end{theorem}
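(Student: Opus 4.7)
The plan is to observe that this theorem is essentially the biconditional obtained by assembling the two corollaries already established in this section, namely Corollary~\ref{cor:perfect} and Corollary~\ref{cor:reduced-to-perfect}. Both hypotheses of the theorem ($M$ is $p$-reduced, $T$ is a test suite for $M$) match the hypotheses of those corollaries, so no additional technical work should be needed beyond invoking them. I therefore expect the proof to be quite short, essentially a two-line argument citing prior results.

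For the forward implication, I would assume $T$ is perfect for $M$ and apply Corollary~\ref{cor:perfect} verbatim to conclude that every $p$-reduced FSM that is $T$-alike to $M$ is isomorphic to $M$. For the converse, I would assume all $p$-reduced $T$-alike FSMs are isomorphic to $M$ and apply Corollary~\ref{cor:reduced-to-perfect} to conclude that $T$ is perfect for $M$. The conjunction of these two implications is exactly the statement of the theorem.

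Since the heavy lifting has already been done in the preceding development (the $\ywb{M}$ construction, Lemma~\ref{lem:alike-bisim}, Corollaries~\ref{cor:m-p-rduced} and \ref{cor:m-alike-m}, and the bi-similarity-vs-isomorphism characterization in Theorem~\ref{completeness-isomorphim}), there is no real obstacle here: the only thing worth double-checking is that the $p$-reducedness hypothesis on $M$ is genuinely used in both directions. It is used explicitly in the converse via Corollary~\ref{cor:reduced-to-perfect} (and there only through the way $\ywb{N}$ is obtained from the candidate $N$), and it is used implicitly in the forward direction through Corollary~\ref{cor:perfect}, which invokes Theorem~\ref{completeness-isomorphim} requiring both machines to be $p$-reduced. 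Hence the statement as written is exactly the two-way combination of the corollaries, and the proof is simply to cite them.
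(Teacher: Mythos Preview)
Your proposal is correct and matches the paper's own proof exactly: the paper's argument is the single line ``Use Corollaries~\ref{cor:perfect} and \ref{cor:reduced-to-perfect}.'' Your additional remarks about where $p$-reducedness is actually used are accurate and consistent with the surrounding development.
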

\begin{proof}
Use Corollaries~\ref{cor:perfect} and \ref{cor:reduced-to-perfect}. \yfim
\end{proof}


\section{Completeness and Perfectness}\label{completenessxperfectness}

In this section we investigate the relationship between completeness and perfectness.
We show that a test suite $T$ that is not $n$-complete for a FSM $M$ can not also be perfect for $M$, for any $n\geq 1$.
In the other direction, we also show that there are test suites $T$ which are perfect for $M$,
but not $n$-complete for $M$, for $n\geq 2$.

We start by showing that perfectness only holds when $n$-completeness also holds. 
Let $M$ be a FSM and let $T$ be a test suite for $M$. 
We want to prove that if $T$ is not $n$-complete for $M$, then $T$ is not perfect for $M$, where $n\geq 1$.
This will show that perfectness is at least as strong a condition as is completeness.

First, we need a measure on the length of blocking test cases in a test suite.
Let $\alpha\in \yias$ be an input string for $M$.
Define $F(M,\alpha)$ as:
$$F(M,\alpha)=\max\big\{ |\beta|\,:\, \alpha=\beta x\gamma, \text{with $\beta\in U(s_0)$, $\beta x\not\in U(s_0)$, $x\in\yia$}\big\}.$$
That is, $F(M,\alpha)$ is the maximum length of a prefix of $\alpha$ which does not block in $M$. 
For a test suite $T\ysse\yias$ we overload the notation and define 
$F(M,T)=\sum\limits_{\alpha\in T} F(M,\alpha)$.
\begin{fact}\label{fact:limite}
Given  a FSM $M$ and a test suite $T$ for $M$, we have the upper bound
$F(M,T)\leq \sum\limits_{\alpha\in T}|\alpha|$.
\end{fact}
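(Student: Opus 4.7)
The plan is to establish the bound pointwise, for each $\alpha\in T$ individually, and then sum. So the first step is to show that $F(M,\alpha)\leq|\alpha|$ for every input sequence $\alpha\in\yias$. This follows directly from the shape of the decomposition $\alpha=\beta x\gamma$ that appears in the definition of $F(M,\alpha)$: any such $\beta$ is a proper prefix of $\alpha$, so $|\beta|\leq|\alpha|-1<|\alpha|$. Hence every element over which the maximum is taken is at most $|\alpha|$, and therefore $F(M,\alpha)\leq|\alpha|$. One small point worth addressing explicitly is the degenerate case in which no prefix of $\alpha$ is the ``switch point'' between non-blocking and blocking behavior (for instance if $\alpha\in U(s_0)$, so the set on the right-hand side of the definition is empty); under the standard convention $\max\yemp=0$ the inequality still holds since $0\leq|\alpha|$.

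Having established the per-sequence bound, the second and final step is simply to sum the inequality $F(M,\alpha)\leq|\alpha|$ over $\alpha\in T$, using the definition $F(M,T)=\sum_{\alpha\in T} F(M,\alpha)$, to conclude
\[
F(M,T)=\sum_{\alpha\in T}F(M,\alpha)\leq\sum_{\alpha\in T}|\alpha|.
\]
There is no real obstacle here; the only subtlety is the convention for the maximum over an empty set in the degenerate case, which is easily dispatched in one sentence. The proof is thus essentially a one-line observation followed by summation.
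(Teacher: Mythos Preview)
Your proposal is correct and is precisely the argument the paper has in mind; the paper's own proof consists of the single word ``Immediate.'' You have simply spelled out the obvious per-sequence bound $F(M,\alpha)\leq|\alpha|$ (including the harmless empty-max edge case) and summed, which is exactly what ``Immediate'' is gesturing at.
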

\begin{proof}
Immediate.
\end{proof}

Now, fix a FSM $M$, a test suite $T$, and assume that $T$ is not $n$-complete for $M$, for some $n\geq 1$.
Then, there is a FSM $N$ such that $M\ynequ N$ and $M \yequ_T N$.
So, we have some $\sigma=x_1x_2\ldots x_{n+1}$, where $n\geq 0$ and $x_i\in\yia$ ($1\leq i\leq n+1$), and such that 
\begin{equation}\label{eq:sig}
\sigma\not\in T \qquad\text{and\qquad $\sigma\in U(s_0)$}.
\end{equation}
Let 
\begin{equation}\label{eq:cinco}
s_0 \ydio{x_1}{a_1} s_1 \ydio{x_2}{a_2} s_2 \cdots s_{n-1} \ydio{x_n}{a_n} s_n
\ydio{x_{n+1}}{a_{n+1}} s_{n+1}.
\end{equation}

We show how to construct a sequence of FSMs $N_i$  that satisfy, for all $i\geq 0$:
\begin{enumerate}
\item $N_i$ is a tree rooted at $q_0$.
\item $\sigma\in U_i(q_0)$.
\item for all $\alpha\in U_i(q_0)\cap T$ we have:
\begin{enumerate}
\item $\alpha\in U(s_0)$.
\item If $q_0 \yddio{\alpha}{\omega}{N_i}$ and $s_0 \yddio{\alpha}{\eta}{M}$, then $\omega=\eta$. 
\end{enumerate}
\end{enumerate}
In order to ease the notation, we denote the states in each $N_i$ as $q_0$,
$q_1$, $q_2$, \ldots, with $q_0$ the initial state.
Moreover, by $U_i(q_0)$ we mean the set of all input strings $\alpha$ such that
$q_0 \yddio{\alpha}{\omega}{N_i}$, for some output string $\omega$.

We start by defining $N_0$ as the FSM containing the transitions:
\begin{equation}\label{eq:um}
q_0 \ydio{x_1}{a_1} q_1 \ydio{x_2}{a_2} q_2 \cdots s_{n-1} \ydio{x_n}{a_n} q_n
\ydio{x_{n+1}}{b} q_{n+1},
\end{equation}
where $b\ne a_{n+1}$.
It is clear that $N_0$ is a tree rooted at $q_0$, and that $\sigma\in U_0(q_0)$, and so 
properties (1) and (2) hold for $N_0$.
Now, let $\alpha\in U_0(q_0)\cap T$.
Since $\sigma\not\in\ T$, we conclude that $\alpha$ is a prefix of $x_1x_2\cdots x_n$, and so 
property (3) also holds for $N_0$.

Now assume that $N_i$ has been constructed satisfying properties (1)--(3), for some $i\geq 0$.
If there is some input string $\alpha \in U(s_0)\cap T$ such that $\alpha\not\in U_i(q_0)$
we show how to construct $N_{i+1}$.
Since   $\alpha\not\in U_i(q_0)$, we can write $\alpha=y_1y_2\cdots y_k x \beta$, where
$k\geq 0$, $y_j\in \yia$ ($1\leq j\leq k$), $x\in\yia$, and where we also have
$y_1y_2\cdots y_k \in U_i(q_0)$, $y_1y_2\cdots y_k x\not\in U_i(q_0)$.
So, in $N_i$ we have the transitions
\begin{equation}\label{eq:dois}
r_0 \ydio{y_1}{b_1} r_1 \ydio{y_2}{b_2} r_2 \cdots r_{k-1} \ydio{y_k}{b_k} r_k
\end{equation}
with $r_0=q_0$ and with no transition out of $r_k$ on input $x$.
Since $\alpha\in U(s_0)$, in $M$ we get 
\begin{equation}\label{eq:tres}
p_0 \ydio{y_1}{b_1} p_1 \ydio{y_2}{b_2} p_2 \cdots p_{k-1} \ydio{y_k}{b_k} p_k
 \ydio{x}{c} p_{k+1},
\end{equation} 
for some $c\in \yia$ and with $p_0=s_0$.
We define $N_{i+1}$ from $N_i$ by adding to it a transition  $r_k \ydio{x}{c} r$, and where
$r$ is a new state not present in $N_i$.

Since $N_i$ is a tree rooted at $q_0$, then so is $N_{i+1}$ because $r$ is a new state.
Then property (1) holds for $N_{i+1}$. 
Also, since all transitions from $N_i$ are present in $N_{i+1}$, then property (2), trivially, also holds for $N_{i+1}$.

Now, let $\gamma\in U_{i+1}(q_0)\cap T$.
Since $\gamma\in U_{i+1}(q_0)$ we have two cases:
\begin{itemize}
\item {\sc Case 1}: the new transition $r_k \ydio{x}{c} r$ does not occur in $\gamma$.
Then, clearly, $\gamma\in U_{i}(q_0)$,  and so (3a) and (3b) hold because $N_i$ satisfies 
property (3). 

\item {\sc Case 2}: the new transition $r_k \ydio{x}{c} r$ occurs in $\gamma$.
Since $r$ is a new state, we can write $\gamma=\delta x$, where $\delta\in U_i(q_0)$ and
$q_0\yddio{\delta}{\eta}{N_{i+1}} r_k\yddio{x}{c}{N_{i+1}} r$.
Since $N_i$ is a tree rooted at $q_0$, there is only one path from $q_0$ to $r_k$.
Hence, from Eq. (\ref{eq:dois}) we get $\delta=y_1y_2\cdots y_k$, and $\eta = b_1b_2\cdots b_k$. 
From Eq. (\ref{eq:tres}) we get $s_0\yddio{\delta}{\eta}{M} p_k\yddio{x}{c}{M} p_{k+1}$, 
and property (3) holds for $N_{i+1}$.
\end{itemize}
We conclude that properties (1)--(3) hold for $N_{i+1}$, as desired.

Because  $\alpha=y_1y_2\cdots y_k x \beta$, $y_1y_2\cdots y_k x\not\in U_i(q_0)$ and the
construction of $N_{i+1}$ gives $y_1y_2\cdots y_k x\in U_{i+1}(q_0)$ we conclude 
that $F(N_i,\alpha) < F(N_{i+1},\alpha)$.
Since we also have $\alpha\in T$, we then get  $F(N_i,T) < F(N_{i+1},T)$.

The preceding discussion shows that we can construct the sequence of FSMs $N_0$, $N_1$, \ldots
satisfying properties (1)--(3), and with  $F(N_i,T) < F(N_{i+1},T)$,  as long as we 
have input strings $\alpha_i\in U(s_0)\cap T$ such that $\alpha_i\not\in U_i(q_0)$, $i\geq 0$.

\begin{fact}\label{fact:theend}
There is some $\ell \geq 0$ such that  there is no $\alpha\in U(s_0)\cap T$ and such that
$\alpha\not\in U_\ell(q_0)$.
\end{fact}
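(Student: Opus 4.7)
The plan is to prove Fact~\ref{fact:theend} by a simple termination argument based on the strict monotonicity of $F(N_i,T)$ and a uniform upper bound coming from Fact~\ref{fact:limite}. Suppose, for the sake of contradiction, that for every $\ell \geq 0$ there exists some $\alpha_\ell \in U(s_0) \cap T$ with $\alpha_\ell \notin U_\ell(q_0)$. Then the inductive construction carried out just before the statement can be iterated indefinitely, producing an infinite sequence of FSMs $N_0, N_1, N_2, \ldots$ all satisfying properties (1)--(3), together with the strict inequality $F(N_i, T) < F(N_{i+1}, T)$ that was established at the previous step.

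Next, I would invoke Fact~\ref{fact:limite}, noting that the inequality $F(M,T) \leq \sum_{\alpha \in T} |\alpha|$ depends only on the fact that the prefix length measured by $F(\cdot,\alpha)$ cannot exceed $|\alpha|$. In particular, the bound is \emph{uniform} over the FSM in the first argument, so for every $i \geq 0$ we have
\[
F(N_i, T) \;\leq\; \sum_{\alpha \in T} |\alpha|.
\]
Since $T$ is finite by Definition~\ref{defin:teste-suite}, the right-hand side is a fixed natural number, call it $K$.

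Combining the two observations, the sequence $\{F(N_i,T)\}_{i\geq 0}$ is a strictly increasing sequence of natural numbers bounded above by $K$, which is impossible. Hence the iteration must halt, meaning that for some $\ell \geq 0$ there is no $\alpha \in U(s_0) \cap T$ with $\alpha \notin U_\ell(q_0)$, which is exactly the statement of the fact.

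I do not anticipate any real obstacle here: the only subtlety is making sure that the bound from Fact~\ref{fact:limite} does not depend on the specific machine, so it applies uniformly to every $N_i$. Once that is observed, the argument is a one-line well-foundedness/pigeonhole-style conclusion, and no further properties of the $N_i$ (trees, outputs, etc.) need to be invoked beyond what has already been established during the construction.
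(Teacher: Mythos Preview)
Your proposal is correct and follows essentially the same approach as the paper: the paper's own proof is the single line ``Fact~\ref{fact:limite} establishes an upper limit to the sequence $F(N_0,T) < F(N_1,T)< \cdots$,'' and you have simply spelled out this termination argument in more detail, including the observation that the bound from Fact~\ref{fact:limite} is uniform in the machine.
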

\begin{proof}
Fact \ref{fact:limite} establishes an upper limit to the sequence $F(N_0,T) < F(N_1,T)< \cdots$. \yfim
\end{proof}

Now we can take the test case $\sigma$, that is not in $T$, and use the fact that the construction
gives $\sigma\in U(q_\ell)$ to show that $T$ is not, in fact, perfect for $M$.

From Eqs. (\ref{eq:sig}) and (\ref{eq:cinco}) we can write $s_0\yddio{\sigma}{\omega a_{n+1}}{M}$, where
$\omega=a_1a_2\cdots a_n$.
From Eq. (\ref{eq:um}) and property (2), we get $s_0\yddio{\sigma}{\omega b}{N_\ell}$.
Since $a_{n+1}\neq b$ we conclude that $M\ynalike N_\ell$.
If $T$ was perfect for $M$ we would have $M\ynalike_T N_\ell$.
We now show that this leads to contradictions.
There are two cases:
\begin{itemize}
\item {\sc Case A}:
there is some input string $\alpha\in U(s_0)\cap U_\ell(q_0)\cap T$ such that 
$s_0\yddio{\alpha}{\omega_1}{M}$, $q_0\yddio{\alpha}{\omega_2}{N_\ell}$, and $\omega_1\neq \omega_2$.
This contradicts property (3b). 
\item {\sc Case B}: there is some input string $\alpha\in (U(s_0)\ysyd U_\ell(q_0))\cap T$.
If $\alpha\in U_\ell(q_0)\cap T$ and $\alpha\not\in U(s_0)$, we contradict property (3a).
If $\alpha\in U(s_0)\cap T$ and $\alpha\not\in U_\ell(q_0)$, we contradict Fact \ref{fact:theend}.
\end{itemize}
We conclude that $T$ is not perfect for $M$.

\begin{fact}\label{fact:atlast}
Let $M$ be a FSM, and let $T$ be a test suite that is not $n$-complete for $M$, for some $n\geq 1$.
Then, $T$ is not perfect for $M$.
\end{fact}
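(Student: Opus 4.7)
The plan is to prove the contrapositive: assuming $T$ fails to be $n$-complete for $M$, I will exhibit a single FSM $N_\ell$ that is $T$-alike to $M$ yet globally unlike $M$, thereby refuting perfectness. The starting point is the witness of non-$n$-completeness, which gives a distinguishing sequence $\sigma \notin T$ with $\sigma \in U(s_0)$ and the output trace in $M$ laid out in Eqs.~(\ref{eq:sig}) and~(\ref{eq:cinco}).

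First, I would invoke the iterative tree construction developed in the preceding pages. The base machine $N_0$ is a tree FSM whose unique path mimics $M$ on $\sigma$ except for flipping the final output to some $b \neq a_{n+1}$, which seeds the eventual unlikeness with $M$ on $\sigma$. Each $N_{i+1}$ then repairs $N_i$ by grafting on a single transition for some test case $\alpha \in U(s_0) \cap T$ that blocks prematurely in $N_i$, copying $M$'s response to the offending action. Because only fresh states are introduced and the added transition follows $M$, the inductive properties~(1)--(3) are preserved: $N_{i+1}$ remains a tree rooted at $q_0$, still accepts $\sigma$, and still agrees with $M$ on every non-blocking test case in $T$.

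The main obstacle is ensuring the construction halts; otherwise no finite witness FSM is produced. Here Fact~\ref{fact:limite} is the crucial tool: each extension strictly increases the potential $F(N_i,T)$, which is bounded above by $\sum_{\alpha \in T}|\alpha|$. Hence the process must stop at some stage $\ell$, at which point Fact~\ref{fact:theend} guarantees that every $\alpha \in U(s_0) \cap T$ is non-blocking in $N_\ell$.

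Finally, I would extract the contradiction. By construction, $\sigma$ runs to completion in both $M$ and $N_\ell$ but produces distinct outputs, so $M \ynalike N_\ell$. If $T$ were perfect for $M$, then $M \ynalike_T N_\ell$ would produce a witness $\alpha \in T$ falling into one of two cases: either $\alpha \in U(s_0) \cap U_\ell(q_0)$ with differing outputs, contradicting property~(3b); or $\alpha$ lies in the symmetric difference $U(s_0) \ysyd U_\ell(q_0)$, which contradicts property~(3a) when $\alpha \in U_\ell(q_0) \setminus U(s_0)$, and contradicts Fact~\ref{fact:theend} when $\alpha \in U(s_0) \setminus U_\ell(q_0)$. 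Every case is impossible, so $T$ cannot be perfect for $M$.
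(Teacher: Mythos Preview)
Your proposal is correct and follows essentially the same route as the paper's own argument: you invoke the tree construction of the $N_i$, appeal to the monotone potential $F(N_i,T)$ bounded via Fact~\ref{fact:limite} to reach the terminal machine $N_\ell$ of Fact~\ref{fact:theend}, and then run the same two-case contradiction (your cases match the paper's Cases~A and~B) to conclude that $T$ cannot be perfect. One small terminological slip: what you describe is a direct proof of the stated implication, not its contrapositive, but this does not affect the mathematics.
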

\begin{proof}
From the preceding discussion. \yfim
\end{proof}

Next we also show that when $T$ is $n$-complete for $M$, $n\geq 1$, it may be the case that $T$ is not perfect for $M$.
Let the input and output alphabets be $\yia=\yoa=\{0,1\}$, and let 
$M$ be the specification with $n$ states given by the transitions 
$s_i \ydio{0}{0} s_{i+1}$, $0\leq i < n$.
Let $T=\{0^n,0^{n-1}\}$ be a test suite for $M$. 
We argue that $T$ is $n$-complete for $M$.
From Definitions~\ref{disting-equiv-partial} and~\ref{def:complete-partial}, if that were not the case, we would have a FSM $N$ with $U(s_0)\ysse U(q_0)$, and such that $M\ynequ N$ and $M\yequ_T N$.
Since $U(s_0)\ysse U(q_0)$ and $U(s_0)=\{0^{n-1}\}$, we get  
$U(s_0)\cap U(q_0)\cap T=\{0^{n-1}\}$.
Hence $M\yequ_T N$ gives $\lambda(s_0,0^{n-1})=0^{n-1}=\mu(q_0,0^{n-1})$.
Since we also have $U(s_0)\cap U(q_0)\cap \yias=\{0^{n-1}\}$, Definition~\ref{disting-equiv-partial} and $M\ynequ N$ would require  $\lambda(s_0,\alpha)\neq\mu(q_0,\alpha)$ for some 
$\alpha\in \{0^{n-1}\}$, and we reached a contradiction.

We now argue that $T=\{0^n,0^{n-1}\}$ is not perfect for the same specification $M$. 
Let $N$ be the FSM with the transitions $q_i \ydio{0}{0} q_{i+1}$ for $0\leq i<n$, and 
also $q_{n-1} \ydio{1}{1} q_{n-1} $.
It is clear that $0^{n-1}1\in (U(s_0)\ysyd U(q_0))\cap \yias$.
Hence, from Definition~\ref{disting-equiv-block}, we see that $M\ynalike N$.
Since $T=\{0^n,0^{n-1}\}$, it is clear that $(U(s_0)\ysyd U(q_0))\cap T=\yemp$.
Moreover, $U(s_0)\cap U(q_0)\cap T=\{0^{n-1}\}$, and so 
$\lambda(s_0,\alpha)=\mu(q_0,\alpha)$ for all $\alpha\in U(s_0)\cap U(q_0)\cap T$.
From Definition~\ref{disting-equiv-block} we get $M\yalike_T N$.
Hence, Definition~\ref{def:n-complete-relax} says that $T$ is not perfect for $M$.

\begin{corollary}\label{cor:perfec-complete}
Let $M$ be a FSM.
Then the following holds:
\begin{enumerate}
\item If $T$ is a test suite which is perfect for $M$, then $T$ is also $n$-complete for $M$, for all $n\geq 1$.
\item For all $n\geq 1$ there are test suites which are $n$-complete but not perfect for $M$.
\end{enumerate}
\end{corollary}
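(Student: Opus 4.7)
The plan is essentially a packaging step: both halves of the corollary have already been laid out in the preceding discussion of Section~\ref{completenessxperfectness}, so the write-up just has to point to the right places and make the implications explicit.

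For item (1), I would argue by contraposition against Fact~\ref{fact:atlast}. Fix $n\geq 1$ and suppose $T$ is a test suite that is not $n$-complete for $M$. Then Fact~\ref{fact:atlast}, applied to this particular $n$, immediately yields that $T$ is not perfect for $M$. Contrapositively, if $T$ is perfect for $M$ then $T$ is $n$-complete for every $n\geq 1$, since the choice of $n$ plays no distinguished role in the implication.

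For item (2), I would invoke the explicit family constructed in the paragraphs immediately preceding the corollary. Given $n\geq 1$, take $\yia=\yoa=\{0,1\}$, let $M$ be the linear FSM with transitions $s_i\ydio{0}{0}s_{i+1}$ for $0\leq i<n$, and set $T=\{0^n,0^{n-1}\}$. The preceding discussion verifies that $T$ is $n$-complete for $M$ --- by a direct check that the only input in $U(s_0)\cap U(q_0)\cap T$ available to any candidate $N$ with $U(s_0)\ysse U(q_0)$ is $0^{n-1}$, so $\yequ_T$ already forces $\yequ$ --- and that $T$ is not perfect for $M$, witnessed by the FSM $N$ obtained from $M$ by adding a self-loop $q_{n-1}\ydio{1}{1}q_{n-1}$ at the last state: $N$ is $T$-alike to $M$ (the two $U$-sets agree on $T$ and outputs coincide on the single common non-blocking test), yet $M\ynalike N$ because $0^{n-1}1\in U(q_0)\ysdif U(s_0)$.

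There is no genuinely new obstacle. All of the substantive work sits inside Fact~\ref{fact:atlast} --- which required the inductive FSM construction $N_0,N_1,\ldots$ together with the monotonicity $F(N_i,T)<F(N_{i+1},T)$ and the upper bound of Fact~\ref{fact:limite} --- and inside the explicit separating example, both already in place; the corollary is just the clean two-line summary of that combined separation.
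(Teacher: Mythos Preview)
Your proposal is correct and mirrors the paper's own proof exactly: the paper's argument is literally ``From the preceding discussion,'' and you have simply unpacked that discussion, invoking Fact~\ref{fact:atlast} by contraposition for item (1) and the explicit linear-$M$ example with $T=\{0^n,0^{n-1}\}$ for item (2). One minor caveat worth flagging (present in the paper as well, not a defect of your write-up): item (2) is stated for a \emph{given} $M$, yet both you and the paper exhibit the separating example only for the specific linear machine, so strictly speaking the quantification over $M$ in the corollary is not addressed.
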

\begin{proof}
From the preceding discussion. \yfim
\end{proof}

\section{Test Suite Completeness and the Size of Implementations}\label{bounding-impls}

In this section we show that if one allows for too large implementations, then test completeness, in the classical sense, is lost.
More specifically, if $T$ is a test suite for a FSM $M$, then $T$ is not $n$-complete for $M$, where $n>k|S|$ is the number of states in implementation machines, and $k$ is a constant that depends only on $T$. 
This means that $T$ may not be able to detect all faults in implementations with $n$ or more states.
In the sequel, we use this result to also establish a bound on the size of implementation models when testing in the presence of blocking test cases, \emph{i.e.}, when
testing for perfectness. 

First, we establish some notation.
Let $\sigma=x_0x_1\cdots x_k$ be a sequence of symbols over an alphabet.
Then $\sigma_{i,j}$ ($0\leq i<j\leq k+1$) indicates the substring $x_ix_{i+1}\cdots x_{j-1}$. 
Let $\alpha$ be another sequence of symbols over the same alphabet.
We say that $\sigma$ is \emph{embedded} in $\alpha$ if and only if
there are sequences of symbols $\beta_i$ ($0\leq i\leq k+1$) such that 
$\alpha=\beta_0x_0\beta_1x_1\cdots \beta_kx_k \beta_{k+1}$.
Let $T$ be a test suite for a FSM $M$ and let $\sigma\in T$.
We say that $\sigma$ is \emph{extensible} in $T$ if and only if 
$\sigma=\sigma_1\sigma_2$ and there is some non-null $\gamma$ such that $\sigma_1\gamma\sigma_2$ is in $T$. Otherwise, $\sigma$ is \emph{non-extensible} in $T$.

From this point on, we fix a reduced FSM $M$ and a test suite $T$ for $M$.
Also, we fix $\sigma=x_0x_1\cdots x_k$, $k\geq 0$, as a smallest non-extensible
test case in $T$.
Trivially, such a test case always exists.
The following construction, and the series of accompanying facts, will give us the
desired result about the size of implementations when testing for completeness by.

\begin{remark}\label{rem:sem-testes}
If $T\cap U(s_0)=\yemp$ then any FSM is trivially $T$-equivalent to $M$.
Moreover, if $\sigma=\yeps$, then $T=\{\yeps\}$ and, again, 
any FSM is trivially $T$-equivalent to $M$.
Since $M$ is reduced, one can easily construct a one-state FSM that is not
equivalent to $M$. 
Hence, in both cases, $T$ would not be $1$-complete for $M$.
We, therefore, can assume that such a non-null $\sigma\in T\cap U(s_0)$. 
\end{remark} 
Since $\sigma\in U(s_0)$, we get transitions $\pi_i: s_i\ydio{x_i}{a_i}{s_{i+1}}$ in $M$
($0\leq i < k$).
Those are the \emph{distinguished transitions of $M$}.
Moreover, since $M$ is reduced, using Remark~\ref{rem:reduced} we have $s_{k+1}\ydio{z}{a} s'$ in $M$, for some
$z\in\yia$, $a\in \yoa$ and $s'\in S$.
We call this the \emph{marked transition of $M$}.
 
We now construct a FSM $N$ using the same input and output alphabets, respectively $\yia$ and $\yoa$, of $M$.
A simple example illustrating the construction is presented right after 
Theorem~\ref{theo:bound-completeness}.
Let $Q=S\times [0,k+1]$, that is, the states of $N$ are pairs $[q,i]$ where 
$q$ is a state of $M$ and $0\leq i\leq k+1$.
The initial state of $N$ is $q_0=[s_0,0]$.
We complete the specification of $N$ by listing its transitions:
\vspace*{-2ex}
\begin{list}{}{\setlength{\leftmargin}{3ex}\setlength{\labelwidth}{1ex}}
\item[(a)] If $s\ydio{y}{b} r$ is not a distinguished transition of $M$, let
$[s, i]\ydio{y}{b} [r,i]$ be a transition in $N$, for all $i$, $0\leq i\leq k$.
\item[(b)] For all distinguished transitions $s_i\ydio{x_i}{a_i}{s_{i+1}}$ of $M$,
let $[s_i, i]\ydio{x_i}{a_i} [s_{i+1},i +1]$ be a transition in $N$.
We call these the \emph{distinguished transitions of $N$}.
\item[(c)] If $s\ydio{y}{b} r$ is not the marked transition of $M$, we let
$[s,k+1]\ydio{y}{b} [r,k+1]$ be a transition in $N$. 
\item[(d)] For the marked transition of $M$, $s_{k+1}\ydio{z}{a} s'$, we let
$[s_{k+1},k+1]\ydio{z}{b} [s',k+1]$, for some $b\neq a$, be a transition in $N$. 
\end{list}
This completes the specification of $N$.
Easily, $N$ has $(|\sigma|+1)|S|$ states.

The next facts make explicit the behavior of the construction.
\begin{fact}\label{fact:M-to-N}
Let $\pi:s\ydio{\alpha}{\omega} p$ in $M$ and take $0\leq i\leq k+1$.
Then in $N$ we must have $[s,i]\ydio{\alpha}{\omega'}[p,j]$ for some $j\geq i$.
Moreover, $\omega=\omega'$ if the marked transition of $M$ does not occur in $\pi$.
\end{fact}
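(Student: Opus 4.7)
The natural plan is induction on $|\alpha|\ge 0$. For the base case $|\alpha|=0$, we have $s=p$ and $\omega=\yeps$; the empty run $[s,i]\ydio{\yeps}{\yeps}[s,i]$ in $N$ witnesses the claim with $j=i$ and $\omega'=\yeps$. The marked transition trivially does not occur in an empty run, so the ``moreover'' part is also immediate.

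For the inductive step I would decompose $\pi=\pi'\cdot\tau$, where $\pi':s\ydio{\alpha'}{\omega_1}r$ has length $|\alpha|-1$ and $\tau:r\ydio{y}{b}p$ is a single step in $M$, so that $\alpha=\alpha'y$ and $\omega=\omega_1 b$. By the induction hypothesis there is a run $[s,i]\ydio{\alpha'}{\omega_1'}[r,j']$ in $N$ with $j'\ge i$ and with $\omega_1'=\omega_1$ whenever the marked transition does not occur in $\pi'$. It then remains to extend this run by one step from $[r,j']$ using $\tau$, which I would handle by a four-way case split driven by $j'$ and by the nature of $\tau$. If $j'\le k$ and $\tau$ is the $j'$-th distinguished transition $\pi_{j'}$ (so $r=s_{j'}$, $y=x_{j'}$, $b=a_{j'}$, $p=s_{j'+1}$), rule (b) of the construction supplies $[r,j']\ydio{y}{b}[p,j'+1]$, and I take $j=j'+1$. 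If $j'\le k$ and $\tau$ is not the distinguished transition at level $j'$, rule (a) supplies $[r,j']\ydio{y}{b}[p,j']$, and I take $j=j'$. If $j'=k+1$ and $\tau$ is not the marked transition, rule (c) supplies $[r,k+1]\ydio{y}{b}[p,k+1]$, and I take $j=k+1$. Finally, if $j'=k+1$ and $\tau$ is the marked transition $s_{k+1}\ydio{z}{a}s'$, rule (d) supplies $[r,k+1]\ydio{y}{b'}[p,k+1]$ with $b'\ne b$, and I take $j=k+1$. In every case $j\ge j'\ge i$, establishing the level monotonicity, and only the last case alters the output; since that case requires $\tau$ to be the marked transition, whenever the marked transition does not appear anywhere in $\pi$ the output equality $\omega=\omega'$ is preserved through the induction.

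The main subtle point I expect is to parse rule (a) so that the case analysis is exhaustive: at each level $j'\le k$, rule (a) must be read as excluding only the distinguished transition relevant to that level, namely $\pi_{j'}$, so that any other distinguished transition $\pi_m$ with $m\ne j'$ that happens to be applicable out of $r$ at level $j'$ remains available as a self-level transition. Under that reading every transition of $M$ out of $r$ is available from $[r,j']$ in $N$, the simulation never gets stuck, and the induction closes. Level monotonicity then follows by simply observing that none of the four cases ever decreases the level, and the output-preservation clause comes for free since only the marked-transition case in rule (d) can introduce an output discrepancy.
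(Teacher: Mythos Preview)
Your proof is correct and follows essentially the same approach as the paper: induction on $|\alpha|$ with a last-step decomposition and a case split driven by the current level and the type of the last transition. The paper merely consolidates your four cases into two (according to whether $j'=k+1$ or $j'\le k$), and it tacitly relies on the very reading of rule~(a) that you flag in your final paragraph.
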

\begin{proof}
By induction on $|\alpha|=n\geq 0$.
When $n=0$ the result follows immediately.

For the induction step, let $\alpha=\beta x$, $\omega=\rho a$, with $x\in \yia$, $a\in \yoa$, and 
$\pi:s\ydio{\beta}{\rho}r\ydio{x}{a} p$.
The induction hypothesis gives $\pi_1:[s,i]\ydio{\beta}{\rho'}[r,j]$ in $N$, with $j\geq i$.

If $j=k+1$, then items (c) and (d) in the construction of $N$ give
$[r,j]\ydio{x}{a'}[p,j]$ in $N$.
Then, clearly, $[s,i]\ydio{\alpha}{\omega'}[p,j]$ in $N$, where $\omega'=\rho' a'$.
Moreover, if the marked transition of $M$ does not occur in $\pi$ then
the induction hypothesis gives $\rho=\rho'$.
Also,  since $r\ydio{x}{a} p$ is not the marked transition of $M$, item (c) of the construction of $N$ yields $a'=a$.
We conclude that $\omega=\rho a=\rho'a'=\omega'$, as desired.

Now take $j<k+1$.
Then items (a) and (b) of the construction give $[r,j]\ydio{x}{a'}[p,\ell]$ in $N$
where $\ell=j$ or $\ell=j+1$. 
Hence, $[s,i]\ydio{\alpha}{\omega'}[p,j]$ with $\omega'=\rho' a'$ and, in any case, $\ell\geq j\geq i$, as desired.
Again, if the marked transition of $M$ does not occur in $\alpha$ then 
we get $\rho=\rho'$ using the induction hypothesis. Clearly, from items (a) and (b)
we have $a'=a$. This readily gives   $\omega=\rho a=\rho'a'=\omega'$, concluding the 
proof.\yfim
\end{proof}

The next result gives the converse.
\begin{fact}\label{fact:N-to-M}
Let $\pi:[s,i]\ydio{\alpha}{\omega} [p,j]$ in $N$.
Then we have: (i) $j\geq i$, (ii) $\sigma_{i,j}$ is embedded in $\alpha$,
 and (iii) $s\ydio{\alpha}{\omega'} p$ in $M$.
Moreover, $\omega=\omega'$ if the marked transition of $N$ does not occur in $\pi$.
\end{fact}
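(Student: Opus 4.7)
The plan is to prove all three assertions simultaneously by induction on $|\alpha|=n\geq 0$, with the moreover clause tracked in parallel. The base case $n=0$ forces $\alpha=\omega=\yeps$ and $[s,i]=[p,j]$, so $s=p$, $i=j$, and (i)–(iii) follow trivially: $j\geq i$, $\sigma_{i,i}$ is the empty string (which is embedded in $\yeps$), and $s\ydio{\yeps}{\yeps}p$ in $M$.

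For the inductive step, I would factor $\alpha=\beta x$ and $\omega=\rho a$ with $x\in\yia$, $a\in\yoa$, and write $\pi$ as $[s,i]\ydio{\beta}{\rho}[r,\ell]\ydio{x}{a}[p,j]$ in $N$. The induction hypothesis applied to the prefix gives $\ell\geq i$, $\sigma_{i,\ell}$ embedded in $\beta$, and $s\ydio{\beta}{\rho'}r$ in $M$ for some $\rho'$, with $\rho=\rho'$ whenever the marked transition of $N$ does not appear in the prefix. Then I would perform a case analysis on which of items (a)–(d) in the construction produced the last transition $[r,\ell]\ydio{x}{a}[p,j]$.

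In cases (a) and (c) we have $j=\ell$, the corresponding transition $r\ydio{x}{a}p$ exists in $M$ (non-distinguished and non-marked, respectively), and $\sigma_{i,j}=\sigma_{i,\ell}$ is already embedded in $\beta\subseteq\alpha$, so taking $\omega'=\rho' a$ settles (i)–(iii). In case (b), $\ell<k+1$ and the last transition is the distinguished $[s_\ell,\ell]\ydio{x_\ell}{a_\ell}[s_{\ell+1},\ell+1]$, so $r=s_\ell$, $p=s_{\ell+1}$, $x=x_\ell$, $a=a_\ell$, $j=\ell+1$, and $\sigma_{i,j}=\sigma_{i,\ell}x_\ell$ is embedded in $\beta x=\alpha$ because $\sigma_{i,\ell}$ was embedded in $\beta$; the original transition lies in $M$, yielding $s\ydio{\alpha}{\rho' a}p$. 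Case (d) matches case (c) structurally: $\ell=j=k+1$ and in $M$ we still have the marked transition $r\ydio{x}{a'}p$ (where $a'$ is the \emph{original} output $a$ of the marked transition of $M$), so $s\ydio{\alpha}{\rho'a'}p$ in $M$.

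For the moreover clause, assume the marked transition of $N$ does not occur in $\pi$. Then it does not occur in the prefix, so the induction hypothesis yields $\rho=\rho'$; and the final step is not the marked transition of $N$, so it falls in case (a), (b), or (c), in each of which the output label on the $N$-transition coincides with that on the corresponding $M$-transition. Hence $\omega=\rho a=\rho'a=\omega'$, completing the induction. The only mildly delicate step is keeping the embedding bookkeeping consistent: case (b) is the only one that advances the second coordinate, and it is precisely the case that consumes the next required letter $x_\ell$ of $\sigma$, so the embedding grows by exactly one symbol at exactly the right moment, which is what makes claim (ii) hold throughout.
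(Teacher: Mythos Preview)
Your proposal is correct and follows essentially the same argument as the paper: induction on $|\alpha|$, factoring off the last step, and a case analysis according to which of items (a)--(d) in the construction of $N$ produced that step, with the ``moreover'' clause handled alongside. The only cosmetic differences are that you group cases (a) and (c) together (both have $j=\ell$ and an $M$-transition with identical output), whereas the paper treats all four cases separately, and your notation in case~(d) briefly overloads the letter~$a$; neither affects correctness.
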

\begin{proof}
By induction on $|\alpha|=n\geq 0$.
When $n=0$ the result follows easily.

For the induction step, let $\alpha=\beta x$, $\omega=\rho a$, with $x\in \yia$, $a\in \yoa$, and 
$\pi':[s,i]\ydio{\beta}{\rho}[r,\ell]\ydio{x}{a} [p,j]$.
The induction hypothesis gives $\ell\geq i$, $\sigma_{i,\ell}$ embedded in $\beta$,
and $s\ydio{\beta}{\rho'} r$ in $M$.
Following the items in the construction of $N$ we have four cases for the transition 
$[r,\ell]\ydio{x}{a} [p,j]$:
\vspace*{-2ex}
\begin{list}{}{\setlength{\leftmargin}{3ex}\setlength{\labelwidth}{1ex}}
\item[(a)] It was added because of item (a).
Then, $\ell=j$ and $r\ydio{x}{a} p$ is in $M$.
We get $j=\ell\geq i$ and $\sigma_{i,j}=\sigma_{i,\ell}$ is embedded in $\alpha$, as desired.
Composing we get $s\ydio{\beta x}{\omega'} p$ in $M$, with $\beta x=\alpha$ and
$\rho'a=\omega'$.
If the marked transition of $M$ does not occur in $\pi$, then
 $\rho=\rho'$ by the induction hypothesis. So, $\omega=\rho a=\rho'a=\omega'$,
as we wanted.
\item[(b)] It was added because of item (b).
Then, $x=x_\ell$, $j=\ell+1$, and $r\ydio{x}{a} p$ in $M$.
Clearly, (i) and (iii) hold, with $\omega'=\rho'a$.
Also, $\sigma_{i,j}=\sigma_{i,\ell+1}=\sigma_{i,\ell}x_\ell$.
Since $\alpha=\beta x=\beta x_\ell$ and $\sigma_{i,\ell}$ is embedded in $\beta$,
we conclude that $\sigma_{i,j}$ is embedded in $\alpha$.
If the marked transition of $M$ does not occur in $\pi$, then we proceed as in case (a),
and obtain $\omega=\rho a=\rho'a=\omega'$, as needed.
\item[(c)] It was added because of item (c).
Now we have $\ell=k+1=j$ and $r\ydio{x}{a}$ in $M$, showing that (i) and (iii) hold
with $s\ydio{\beta x}{\omega'} p$ and $\omega'=\rho'a$.
We have that $\sigma_{i,\ell}=\sigma_{i,j}$ is already embedded in $\beta$ and so
its also embedded in $\alpha$, given that $\alpha=\beta x$.
The reasoning to obtain $\omega=\omega'$ is the same as in case (a).
\item[(d)] It was added because of item (d).
Proceed exactly as in case (c).
Now, the marked transition of $N$ does 
occur in $\pi$ and so the last statement of the Fact holds vacuously.
This last case concludes the proof.\yfim
\end{list}
\end{proof}

The last two results already establish that the same sequences of input symbols
will run in both machines.
\begin{fact}\label{fact:equalUs}
$U(s_0)=U(q_0)$.
\end{fact}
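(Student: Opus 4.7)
The plan is to deduce this as an immediate corollary of the two preceding facts, which were set up precisely to give matching input-sequence behaviors on the two machines.

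First, I would establish the inclusion $U(s_0) \ysse U(q_0)$. Take any $\alpha \in U(s_0)$, so that $s_0 \ydio{\alpha}{\omega} p$ in $M$ for some $p \in S$ and $\omega \in \yoas$. Applying Fact~\ref{fact:M-to-N} with $s = s_0$ and $i = 0$, we obtain $[s_0, 0] \ydio{\alpha}{\omega'} [p, j]$ in $N$ for some $j \geq 0$ and some $\omega' \in \yoas$. Since $q_0 = [s_0, 0]$, this shows $\alpha \in U(q_0)$.

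For the reverse inclusion $U(q_0) \ysse U(s_0)$, take any $\alpha \in U(q_0)$, so $[s_0, 0] \ydio{\alpha}{\omega} [p, j]$ in $N$ for some $[p,j] \in Q$ and $\omega \in \yoas$. Applying Fact~\ref{fact:N-to-M} with $s = s_0$, $i = 0$, we get $s_0 \ydio{\alpha}{\omega'} p$ in $M$ for some $\omega' \in \yoas$, hence $\alpha \in U(s_0)$.

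There is no real obstacle here: both directions follow directly from the structural results about how runs in $M$ lift to runs in $N$ and vice-versa. The construction of $N$ was designed exactly so that the only differences between the two machines lie in outputs, not in the set of runnable input sequences (items (a)--(d) of the construction never delete a transition that exists in $M$, and only ever change the output label, on the marked transition in item (d)).
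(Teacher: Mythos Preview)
Your proposal is correct and follows essentially the same approach as the paper: both directions are obtained by applying Fact~\ref{fact:M-to-N} (with $s=s_0$, $i=0$) and Fact~\ref{fact:N-to-M} respectively, using $q_0=[s_0,0]$. The paper's version is just a terser rendering of exactly the argument you give.
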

\begin{proof}
Recall that $q_0=[s_0,0]$.
Let $s_0\ydi{\alpha}$ in $M$.
Using Fact~\ref{fact:M-to-N} we get $[s_0,0]\ydi{\alpha}$ in $N$.
Hence, $U(s_0)\ysse U(q_0)$.
In a similar way we can get $U(q_0)\ysse U(s_0)$ using Fact~\ref{fact:N-to-M},
and the result follows.\yfim
\end{proof}

We are now in a position to show that $M$ and $N$ are $T$-equivalent.
\begin{fact}\label{fact:M-equiv-N}
$M\yequ_T N$.
\end{fact}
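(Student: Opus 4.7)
The plan is to reduce the claim $M\yequ_T N$ to a statement purely about the marked transition of $N$. By Fact~\ref{fact:equalUs} we already have $U(s_0)=U(q_0)$, so the set $U(s_0)\cap U(q_0)\cap T$ of Definition~\ref{disting-equiv-partial} collapses to $T\cap U(s_0)$; it therefore suffices to verify $\lambda(s_0,\alpha)=\tau(q_0,\alpha)$ for every $\alpha\in T\cap U(s_0)$. The last clause of Fact~\ref{fact:N-to-M} tells us that this equality is automatic whenever $N$'s run on $\alpha$ does not use the marked transition. So the entire proof reduces to the following key claim: no test case $\alpha\in T\cap U(s_0)$ can drive $N$ to fire its marked transition.

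I will argue the key claim by contradiction. Suppose $\alpha\in T\cap U(s_0)$ and let $t$ be the first position at which $N$'s run on $\alpha$ uses the marked transition. Writing $\alpha=\alpha_1 z\alpha_2$ with $|\alpha_1|=t-1$, the prefix $\alpha_1$ takes $N$ from $[s_0,0]$ to $[s_{k+1},k+1]$ without ever using the marked transition. Applying Fact~\ref{fact:N-to-M} to $\alpha_1$ with $i=0$ and $j=k+1$ gives that $\sigma=x_0x_1\cdots x_k$ is embedded in $\alpha_1$, yielding a decomposition $\alpha_1=\beta_0 x_0\beta_1 x_1\cdots\beta_k x_k\beta_{k+1}$ for certain input sequences $\beta_i\in\yias$. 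Inspecting the construction of $N$, each $\beta_i$ with $0\le i\le k$ must keep $N$'s level fixed at $i$, so it can use only non-distinguished transitions of $M$; since the next distinguished transition has to fire out of state $[s_i,i]$, the walk $\beta_i$ is forced to return $M$ to $s_i$. Analogously, $\beta_{k+1}$ is a walk in $M$ that loops at $s_{k+1}$ while avoiding the marked transition.

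The closing step turns this structural picture into a witness that $\sigma$ is extensible in $T$, contradicting the choice of $\sigma$. In the clean sub-case where $\beta_0=\beta_1=\cdots=\beta_k=\yeps$, the test case reads $\alpha=\sigma\cdot\beta_{k+1}z\alpha_2$; setting $\sigma_1=\sigma$, $\sigma_2=\yeps$, and $\gamma=\beta_{k+1}z\alpha_2$ gives $\sigma_1\gamma\sigma_2=\alpha\in T$ with $\gamma$ non-null because $z\ne\yeps$, so $\sigma$ is extensible and the desired contradiction follows. The main obstacle is the remaining sub-case where some intermediate $\beta_j$ ($j\le k$) is non-empty; here I expect to use the full strength of the hypothesis that $\sigma$ is a \emph{smallest} non-extensible test case in $T$. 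The plan is to exploit the loop structure of each $\beta_i$ at $s_i$, together with reducedness of $M$, so as either to re-embed $\sigma$ in $\alpha_1$ with all intermediate $\beta_i$'s empty (falling back to the clean case above), or to exhibit a strictly shorter non-extensible element of $T$, which would contradict the minimality of $\sigma$. This delicate case analysis is the single technically demanding step of the whole argument. \yfim
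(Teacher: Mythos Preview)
Your overall route is the paper's: reduce $M\yequ_T N$ to ``no $\alpha\in T\cap U(s_0)$ fires the marked transition in $N$'', and then argue that if such an $\alpha$ existed, Fact~\ref{fact:N-to-M} would force $\sigma=\sigma_{0,k+1}$ to be embedded in a proper prefix of $\alpha$, hence in $\alpha\in T$ itself. At that point the paper finishes in one sentence: since $\alpha\in T$, $\sigma$ is extensible in $T$ --- contradiction. It performs no case analysis on the $\beta_i$, never unpacks the loop structure at the $s_i$, and never uses the minimality of $|\sigma|$ beyond the bare non-extensibility of $\sigma$.

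Your proposal diverges by trying to justify that final inference through a split on whether the intermediate $\beta_j$'s ($j\le k$) vanish. The clean sub-case is fine. The other sub-case, however, is not proved: you offer only a plan (either ``re-embed $\sigma$ with all intermediate $\beta_i$ empty'' or ``exhibit a strictly shorter non-extensible element of $T$'') and explicitly label it the demanding step. Neither branch is made to work --- the $\beta_i$ are pieces of a \emph{fixed} test case $\alpha\in T$ and cannot be deleted at will, and nothing in sight manufactures a shorter element of $T$. So the proposal has a genuine gap precisely where you flag it, and as written the argument is incomplete. Compared with the paper, you are attempting to substantiate an implication the paper simply asserts; if you accept that implication, your clean sub-case already suffices and the second sub-case evaporates.
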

\begin{proof}
We go by contradiction.
Assume we have $\alpha x\in T\cap U(s_0)\cap U(q_0)$, $x\in\yia$ such that
$s_0\ydio{\alpha}{\omega} s\ydio{x}{a} r$ in $M$ and 
$[s_0,0]\ydio{\alpha}{\omega} [q,i]\ydio{x}{b} [p,j]$ in $N$, with $a\neq b$. 
Fact~\ref{fact:N-to-M} gives $s_0\ydi{\alpha} q$ in $M$.
But we already have $s_0\ydi{\alpha} s$ in $M$, and so we conclude that $s=q$.
Using Fact~\ref{fact:N-to-M} again, from $s\ydi{x} r$ in $M$ and $[s,i]\ydi{x}[p,j]$ in $N$
we get $p=r$.
We can now write $\pi:[s,i]\ydio{x}{b} [r,j]$ in $N$ and $s\ydio{x}{a} r$ in $M$ with $a\neq b$.
From the construction of $N$ we conclude that $\pi$ is the marked transition of $N$.
Hence, $i=j=k+1$.
We now have $[s_0,0]\ydio{\alpha}{\omega}[s,k+1]$ in $N$.
From Fact~\ref{fact:N-to-M}, $\sigma=\sigma_{0,k+1}$ is embedded in $\alpha$ and so
$\sigma$ is embedded in $\alpha x$.
Since $\alpha x\in T$, we conclude that $\sigma$ is extensible in $T$.
But this contradicts the choice of $\sigma$, completing the proof.\yfim
\end{proof}

In the opposite direction, the next result shows that $M$ and $N$ are not equivalent.
\begin{fact}\label{fact:M-not-equiv-N}
$M\ynequ N$.
\end{fact}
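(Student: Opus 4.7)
The plan is to exhibit a single input sequence on which $M$ and $N$ produce different outputs, using the marked transition as the witness of the discrepancy. The natural candidate is $\alpha = \sigma z = x_0 x_1 \cdots x_k z$, where $z$ is the input of the marked transition $s_{k+1} \ydio{z}{a} s'$ guaranteed by Remark~\ref{rem:reduced} (since $M$ is reduced, $s_{k+1}$ has some outgoing transition).

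First I would check that $\alpha \in U(s_0) \cap U(q_0)$. Clearly $\alpha \in U(s_0)$, since $\sigma$ reaches $s_{k+1}$ in $M$ and the marked transition provides the $z$-step out of $s_{k+1}$. Then Fact~\ref{fact:equalUs} immediately gives $\alpha \in U(q_0)$ as well.

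Next I would compute the output on $\alpha$ in each machine. In $M$, the run produces output $a_0 a_1 \cdots a_k a$, by the distinguished transitions together with the marked transition. In $N$, starting at $q_0 = [s_0,0]$, the distinguished transitions of $N$ (item (b) of the construction) force the path $[s_0,0] \ydio{x_0}{a_0} [s_1,1] \ydio{x_1}{a_1} \cdots \ydio{x_k}{a_k} [s_{k+1},k+1]$. Since $N$ is deterministic and this is the only path available (each $[s_i,i]$ on input $x_i$ goes to $[s_{i+1}, i+1]$), the index $k+1$ is reached. Then input $z$ at state $[s_{k+1}, k+1]$ fires the marked transition of $N$ (item (d)) producing output $b$, not $a$. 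So $N$ produces $a_0 a_1 \cdots a_k b$.

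Since $a \neq b$ by construction, the outputs on $\alpha$ differ in their last symbol, and thus $\lambda(s_0,\alpha) \neq \tau(q_0, \alpha)$ with $\alpha \in U(s_0) \cap U(q_0) \subseteq U(s_0) \cap U(q_0) \cap \yias$. By Definition~\ref{disting-equiv-partial}, this yields $M \ynequ N$, as desired. No real obstacle is expected; the only subtle point is confirming that in $N$ the run on $\sigma$ from $[s_0,0]$ is forced to climb through the indices and arrive precisely at $[s_{k+1}, k+1]$ so that item (d) (and not item (c)) applies to the final $z$-transition, but this is immediate from the determinism of $N$ and the construction.
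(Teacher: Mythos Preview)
Your proposal is correct and follows essentially the same argument as the paper: both use the input sequence $\sigma z$, trace the run in $M$ via the distinguished and marked transitions to get output $\omega a$, trace the run in $N$ via item~(b) to reach $[s_{k+1},k+1]$ and then item~(d) to get output $\omega b$, and conclude $M\ynequ N$ from $a\neq b$. Your explicit remark on determinism to justify that the run in $N$ is forced through the indices is a slight elaboration, but the overall strategy is identical.
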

\begin{proof}
Since $\sigma\in U(s_0)$, Fact~\ref{fact:equalUs} gives $\sigma\in U(q_0)$.
By the choice of $\sigma$, in $M$ we have $s_0\ydio{\sigma}{\omega} s_{k+1}$.
Further, by the choice of $z$ and $a$, we have $s_{k+1}\ydio{z}{a} s'$ in $M$.
Hence, $s_0\ydio{\sigma z}{\omega a}s'$ in $M$.
Item (b) of the construction of $N$ gives $[s_i,i]\ydio{x_i}{a_i}[s_{i+1},i+1]$, $0\leq i\leq k$.
Then, $[s_0,0]\ydio{\sigma}{\omega}[s_{k+1},k+1]$ in $N$.
By item (d) of the construction of $N$ we get 
$[s_{k+1},k+1]\ydio{z}{b} [s',k+1]$ in $N$.
Composing, we obtain $[s_0,0]\ydio{\sigma z}{\omega b} [s',k+1]$ in $N$.
This shows that $M\ynequ N$, because $a\neq b$.\yfim
\end{proof}

Collecting, we can show that a test suite $T$ will not be $n$-complete for
a FSM $M$ when
$n$ is larger than a certain bound, which depends only on $M$ and $T$.
\begin{theorem}\label{theo:bound-completeness}
Let $M$ be a FSM and let $T$ be a test suite for $M$.
Let $\sigma$ be a shortest test case in $T$ that is non-extensible in $T$.
Then $T$ is not $\big((|\sigma|+1)|S|\big)$-complete for $M$.
\end{theorem}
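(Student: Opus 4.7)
My plan is to invoke the construction and facts that precede the theorem statement, since the author has essentially set everything up. The proof is really just an assembly step.

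First, I would dispose of the degenerate cases covered by Remark~\ref{rem:sem-testes}. If $T\cap U(s_0)=\yemp$ or if the shortest non-extensible test case is $\sigma=\yeps$ (so $T=\{\yeps\}$), then every FSM is trivially $T$-equivalent to $M$, while $M$ being reduced allows us to exhibit a one-state FSM $N'$ with $M\ynequ N'$; this $N'$ has $1\leq (|\sigma|+1)|S|$ states (since $|S|\geq 1$), so $T$ is not even $1$-complete and the conclusion holds a fortiori. So I may assume $\sigma\in T\cap U(s_0)$ is non-null, placing us exactly in the setting of the preceding construction.

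Next, I would take $N$ to be the FSM constructed immediately before Fact~\ref{fact:M-to-N}, whose state set is $Q=S\times[0,k+1]$ with $k+1=|\sigma|$, hence $|Q|=(|\sigma|+1)|S|$. To contradict $\big((|\sigma|+1)|S|\big)$-completeness of $T$ via Definition~\ref{def:complete-partial}, I must check three things about this $N$: that it has at most $(|\sigma|+1)|S|$ states, that $U(s_0)\subseteq U(q_0)$, and that $M\ynequ N$ while $M\yequ_T N$. The state count is immediate from the definition of $Q$; the inclusion $U(s_0)\subseteq U(q_0)$ (in fact equality) is exactly Fact~\ref{fact:equalUs}; the $T$-equivalence $M\yequ_T N$ is Fact~\ref{fact:M-equiv-N}; and the global inequivalence $M\ynequ N$ is Fact~\ref{fact:M-not-equiv-N}.

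Putting these together, $N$ is a witness FSM with at most $(|\sigma|+1)|S|$ states satisfying the side condition $U(s_0)\subseteq U(q_0)$ of Definition~\ref{def:complete-partial}, for which $M\ynequ N$ but $M\yequ_T N$. This directly negates $n$-completeness of $T$ for $M$ at $n=(|\sigma|+1)|S|$, yielding the theorem.

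There is no real obstacle left: all the combinatorial and inductive work has been discharged in Facts~\ref{fact:M-to-N}--\ref{fact:M-not-equiv-N}. The only point requiring a sentence of care is the edge case handled by Remark~\ref{rem:sem-testes}, since the construction presupposes a non-null $\sigma\in T\cap U(s_0)$; that is why I would treat it first before switching to the main line of argument.
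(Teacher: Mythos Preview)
Your proposal is correct and follows essentially the same approach as the paper: invoke Facts~\ref{fact:equalUs}, \ref{fact:M-equiv-N}, and \ref{fact:M-not-equiv-N} to verify that the constructed machine $N$ with $(|\sigma|+1)|S|$ states witnesses the failure of $n$-completeness via Definition~\ref{def:complete-partial}. Your explicit treatment of the degenerate cases from Remark~\ref{rem:sem-testes} is a welcome bit of care that the paper leaves implicit in its standing assumptions, but otherwise the arguments coincide.
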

\begin{proof}
The construction of $N$ yields a machine that is $T$-equivalent to $M$, 
using Fact~\ref{fact:M-equiv-N}.
We also know that $M$ and $N$ are not equivalent, by Fact~\ref{fact:M-not-equiv-N}.
Also, using Fact~\ref{fact:equalUs}, we know that $U(s_0)\ysse U(q_0)$.
Since $N$ has $n=(|\sigma|+1)\times |S|$ states, 
Definition~\ref{def:complete-partial} says that $T$ is not $n$-complete for $M$.
\yfim
\end{proof}

Next, we give a simple example to illustrate the construction of machine $N$. 
Let $\ymfm$ be a specification FSM as depicted in Figure~\ref{fsm-spec}. 
The set of states is $S=\{s_0,s_1\}$, $\yia=\yoa=\{0,1\}$, and $D,\delta,\lambda$ are given as depicted in the figure. 
Note that $M$ is a partial FSM since $(s_1,1)\notin D$. 
Also let $T=\{0000,100\}$ be a test suite for $M$.
We notice that $T$ is $2$-complete for $M$, \emph{i.e.}, for implementation FSMs with at most as many states as $M$. 
This can be checked by using the algorithm described in~\cite{BonifacioSEFM2014,TR-IC-13-33}. 

Now take $\sigma=100$ as the shortest test case in $T$ that is non-extensible in $T$. 
We apply items (a) to (d) of the construction of $N$, thus obtaining a machine with $(|\sigma|+1)|S|=(3+1)2=8$ states. 
From item (a) we create transitions $[s_0,i]\ydio{0}{1} [s_0,i]$, for all $i$, $0\leq i\leq 2$. 
We also obtain the distinguished transitions $[s_0, 0]\ydio{1}{1} [s_1,1]$, $[s_1, 1]\ydio{0}{0} [s_1,2]$, $[s_1, 2]\ydio{0}{0} [s_1,3]$ $[s_0, 1]\ydio{1}{1} [s_1,2]$, $[s_0, 2]\ydio{1}{1} [s_1,3]$ and $[s_1, 0]\ydio{0}{0} [s_1,1]$ from item (b). 
From item (c) we get the transitions $[s_0,3]\ydio{0}{1} [s_0,3]$, $[s_0,3]\ydio{0}{1} [s_0,3]$ and $[s_0,3]\ydio{1}{1} [s_1,3]$. 
Finally we complete machine $N$ with the marked transition $[s_3,3]\ydio{0}{1} [s_3,3]$ as required by item (d). 
Machine $N$ is depicted in Figure~\ref{fig:fsm-counter-all}. 
It is a simple matter to see that states $[s_0,1]$, $[s_0,2]$, $[s_0,3]$ and $[s_1,0]$ are not reachable in $N$. 
Then we can remove them in order to obtain a reduced FSM as depicted in Figure~\ref{fig:fsm-counter}.  
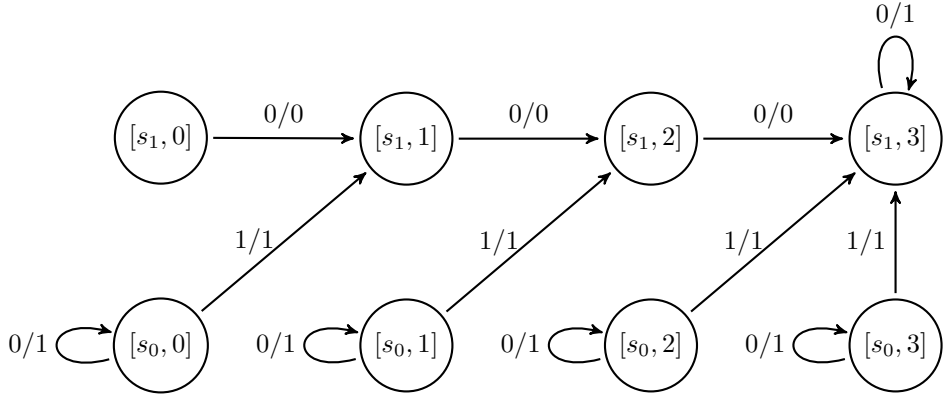
\begin{figure}[htb]
\center

\begin{tikzpicture}[node distance=1cm, auto]
  \node[ punkt] (q0) {$[s_0,0]$};
  \node[punkt, inner sep=3pt,right=2cm of q0] (q1) {$[s_0,1]$};
  \node[punkt, inner sep=3pt,right=2cm of q1] (q2) {$[s_0,2]$};
  \node[punkt, inner sep=3pt,right=2cm of q2] (q3) {$[s_0,3]$};
  \node[punkt, inner sep=3pt,above=1.5cm of q0] (q4) {$[s_1,0]$};
  \node[punkt, inner sep=3pt,above=1.5cm of q1] (q5) {$[s_1,1]$};
  \node[punkt, inner sep=3pt,above=1.5cm of q2] (q6) {$[s_1,2]$};
  \node[punkt, inner sep=3pt,above=1.5cm of q3] (q7) {$[s_1,3]$};

  \path (q0)   edge[pil, loop left]
                	node[anchor=left,left]{$0/1$} (q0);
  \path (q1)   edge[pil, loop left]
                	node[anchor=left,left]{$0/1$} (q1);
  \path (q2)   edge[pil, loop left]
                	node[anchor=left,left]{$0/1$} (q2);
  \path (q3)   edge[pil, loop left]
                	node[anchor=left,left]{$0/1$} (q3);

\path (q0)    edge [ pil]
                	node[anchor=north,left]{$1/1$} (q5);
\path (q1)    edge [ pil]
                	node[anchor=north,left]{$1/1$} (q6);
\path (q2)    edge [ pil]
                	node[anchor=north,left]{$1/1$} (q7);
\path (q3)    edge [ pil]
                	node[anchor=north,left]{$1/1$} (q7);

\path (q4)    edge [ pil]
                	node[anchor=north,above]{$0/0$} (q5);
\path (q5)    edge [ pil]
                	node[anchor=north,above]{$0/0$} (q6);
\path (q6)    edge [ pil]
                	node[anchor=north,above]{$0/0$} (q7);
  \path (q7)   edge[pil, loop above]
                	node[anchor=north,above]{$0/1$} (q7);
\end{tikzpicture}
\caption{A candidate implementation $N$.}
\label{fig:fsm-counter-all}
\end{figure}
Note that we have renamed states as $q_0=[s_0,0]$, $q_1=[s_1,1]$, $q_2=[s_1,2]$, and $q_3=[s_1,3]$. 
\begin{figure}[htb]
\center

\begin{tikzpicture}[node distance=1cm, auto]
  \node[ punkt] (q0) {$q_0$};
  \node[punkt, inner sep=3pt,below=1.5cm of q0] (q1) {$q_1$};
  \node[punkt, inner sep=3pt,right=2cm of q1] (q2) {$q_2$};
  \node[punkt, inner sep=3pt,right=2cm of q2] (q3) {$q_3$};

  \path (q0)   edge[pil, loop left]
                	node[anchor=left,left]{$0/1$} (q0);
\path (q0)    edge [ pil]
                	node[anchor=north,left]{$1/1$} (q1);
\path (q1)    edge [ pil]
                	node[anchor=north,above]{$0/0$} (q2);
\path (q2)    edge [ pil]
                	node[anchor=north,above]{$0/0$} (q3);
\path (q3)    edge [ pil, loop above]
                	node[anchor=north,above]{$0/1$} (q3);

\end{tikzpicture}
\caption{A reduced candidate implementation $N$.}
\label{fig:fsm-counter}
\end{figure}
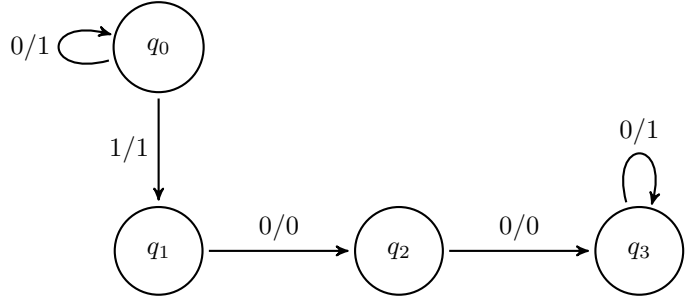

Now we can easily check that $M\yequ_T N$ because $\lambda(s_0,0000)=1111=\tau(q_0,0000)$ and $\lambda(s_0,100)=100=\tau(q_0,100)$.
But $M\ynequ N$ since we have $\lambda(s_0,1000)=1000\neq1001=\tau(q_0,1000)$. 
It is also easy to verify that $U(s_0)\subseteq U(q_0)$. 
We conclude that $T$ is not $4$-complete for $M$, and so it is also not $8$-complete for $M$,
where $8$ is the bound specified by Theorem~\ref{theo:bound-completeness}. 


\section{$m$-Perfectness}\label{m-perfectness}

Combining Theorem~\ref{theo:bound-completeness} and Corollary~\ref{cor:perfec-complete}(1),
we see that no test suite $T$ can be perfect for a given specification $M$ if we allow the
number of states in implementations to be put under test to grow beyond a bound
$k|S|$, where $|S|$ is the number of states in $M$ and $k$ is a constant that depends 
on $T$ alone.
This leads us to the notion of $m$-perfectness.
\begin{defin}\label{def:m-perfect}
Let $M$ be a FSM and $T$ be a test suite for $M$.
Then $T$ is \emph{$m$-perfect for $M$} iff for any FSM $N$ with at most $m$ states, if $M\ynalike N$ then $M\ynalike_T N$. 
\end{defin}
That is, $m$-perfectness guarantees that any difference in behavior between the specification $M$ and a implementation $N$ will be detected when we run the tests in $T$, even in the presence of blocking test cases, given that implementations are restricted to have at most $m$ states.
In other words, if $T$ is a $m$-perfect test suite for a specification $M$, then for any implementation under test $N$, if $M$ and $N$ are unlike, then they are also $T$-unlike, provided that $N$ has at most $m$ states.

We proceed to obtain necessary and sufficient conditions for $m$-perfectness, by showing that
a result analogous to Theorem~\ref{necessity-sufficiency-block}.
The following result will be useful when we consider certain bi-similarities. 
 \begin{lemma}\label{alike-walk}
 Let $M$ and $N$  be FSMs.  
 Let $n\geq 1$, $s_i\in S$, $p_i\in Q$ ($1\leq i\leq n$) and $x_i\in\yia$,  $a_i\in\yoa$, $b_i\in\yoa'$
 ($1\leq i< n$) be such that $s_i\ydio{x_i}{a_i} s_{i+1}$ and 
 $p_i\ydio{x_i}{b_i} p_{i+1}$ ($1\leq i< n$).
 Assume further that $s_1\yalike p_1$.
 Then $s_i\yalike p_i$ ($1\leq i\leq n$) and $a_1a_2\cdots a_{n-1}=b_1b_2\cdots b_{n-1}$.
 \end{lemma}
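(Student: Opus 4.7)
The plan is to proceed by induction on $n\geq 1$. The base case $n=1$ is immediate, since $s_1\yalike p_1$ holds by hypothesis and the two output strings $a_1a_2\cdots a_{n-1}$ and $b_1b_2\cdots b_{n-1}$ are both the empty string.

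For the inductive step, assume the result holds for some $n\geq 1$, and suppose we are given the hypotheses of the lemma for $n+1$. The induction hypothesis applied to the first $n$ transitions yields $s_i\yalike p_i$ for $1\leq i\leq n$ and $a_1a_2\cdots a_{n-1}=b_1b_2\cdots b_{n-1}$. In particular, $s_n\yalike p_n$ and we have the transitions $s_n\ydio{x_n}{a_n} s_{n+1}$ in $M$ and $p_n\ydio{x_n}{b_n} p_{n+1}$ in $N$.

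Next, I would invoke Lemma~\ref{lem:alike}(1), observing (in the same spirit as Remark~\ref{rema:alike-trans}) that its argument carries over verbatim when the two alike states lie in distinct machines: the only property used is the determinism of the transition function, which holds in both $M$ and $N$. Applied to $s_n\yalike p_n$ and to $s_n\ydio{x_n}{a_n}s_{n+1}$, this yields some $q\in Q$ with $p_n\ydio{x_n}{a_n}q$ and $s_{n+1}\yalike q$. Since $N$ is deterministic and we also have $p_n\ydio{x_n}{b_n}p_{n+1}$, we conclude that $q=p_{n+1}$ and $a_n=b_n$. Combining this with the induction hypothesis gives $s_{n+1}\yalike p_{n+1}$ and $a_1a_2\cdots a_n = b_1b_2\cdots b_n$, closing the induction.

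The only real obstacle is the cross-machine application of Lemma~\ref{lem:alike}(1), and this is routine: one merely inspects its proof and notes that the single use of the phrase ``since $M$ is deterministic'' remains valid when the target states sit in $N$, since $N$ is deterministic too. Everything else in the argument is straightforward bookkeeping.
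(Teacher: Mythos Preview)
Your proof is correct. The cross-machine use of Lemma~\ref{lem:alike}(1) is indeed harmless: its argument only needs determinism of the two machines and the equality $U(s)=U(r)$ that comes with $s\yalike r$, and both are available here.

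The paper's proof is organized a little differently. First, it obtains the equality of the output strings in one shot, directly from Definition~\ref{disting-equiv-block}: since $\sigma=x_1\cdots x_{n-1}\in U(s_1)\cap U(p_1)$ and $s_1\yalike p_1$, we get $\lambda(s_1,\sigma)=\tau(p_1,\sigma)$ immediately. Second, for the alikeness of intermediate states the paper also writes an induction, but the inductive hypothesis $s_k\yalike p_k$ is never actually used: the step shows directly that $s_{k+1}\not\yalike p_{k+1}$ contradicts $s_1\yalike p_1$, by prepending $\alpha=x_1\cdots x_k$ to any witness (either a $\beta\in U(s_{k+1})\ysyd U(p_{k+1})$ or a $\beta$ with differing outputs). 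In effect the paper reproves the content of Lemma~\ref{lem:alike} inline, pushing the whole prefix at once rather than one symbol at a time.

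So your route is a genuine, if modest, variant: you reuse Lemma~\ref{lem:alike} as a black box and let the induction do the work step by step, picking up $a_n=b_n$ along the way; the paper unfolds the definition and handles the outputs and the alikeness separately. Your version is arguably cleaner, at the small cost of having to remark that Lemma~\ref{lem:alike}(1) is valid across two machines.
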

 \begin{proof}
 Let $\sigma=x_1x_2\cdots x_{n-1}$, $\omega_1=a_1a_2\cdots a_{n-1}$ and 
 $\omega_2=b_1b_2\cdots b_{n-1}$.
We clearly have  $s_{1}\ydio{\sigma}{\omega_1} s_n$ and $p_{1}\ydio{\sigma}{\omega_2} p_n$.
Definition~\ref{disting-equiv-block} immediately gives $\omega_1=\omega_2$, because $s_1\yalike p_1$ and $\sigma\in U(s_1)\cap U(q_1)$.
 
To see that $s_i\yalike p_i$ ($1\leq i\leq n$) we  go by induction on $n$. 
The basis follows from the hypothesis, and we proceed with the induction step.
Let $1\leq k<n$ and assume $s_k\yalike p_k$.
Let $\alpha=x_1\cdots x_k$.
Clearly $\delta(s_1,\alpha)=s_{k+1}$, $\mu(p_1,\alpha)=p_{k+1}$ and so 
$\alpha\in U(s_1)\cap U(p_1)$.
For te sake of contradiction, assume that $s_{k+1}\not\yalike p_{k+1}$.
By Definition~\ref{disting-equiv-block} we have two cases.
\begin{list}{}{\setlength{\leftmargin}{3ex}\setlength{\labelwidth}{1ex}}
\item[\hspace*{-1ex}\sc Case 1:] $U(s_{k+1})\ysyd U(p_{k+1})\neq \yemp$. 

\noindent Let $\beta\in U(s_{k+1})$ and $\beta\not\in U(p_{k+1})$.
This gives $\alpha\beta\in U(s_{1})$ and $\alpha\beta\not\in U(p_{1})$.
Hence $U(s_{1})\ysyd U(p_{1})\neq \yemp$, contradicting $s_1\yalike p_1$.
The situation when $\beta\not\in U(s_{k+1})$ and $\beta\in U(p_{k+1})$ is 
entirely analogous.

\item[\hspace*{-1ex}\sc Case 2:] $\beta\in U(s_{k+1})\cap U(p_{k+1})$ and
$\lambda(s_{k+1},\beta)\neq \tau(p_{k+1},\beta)$, for some 
$\beta\in\yias$.

\noindent This gives $\alpha\beta\in U(s_{1})\cap U(p_{1})$.
Moreover,
\begin{align*}
\lambda(s_1,\alpha\beta)&=\lambda(s_1,\alpha)\lambda(\delta(s_1,\alpha),\beta))=
\lambda(s_1,\alpha)\lambda(s_{k+1},\beta), \; \text{and}\\
\tau(p_1,\alpha\beta)&=\tau(p_1,\alpha)\tau(\mu(p_1,\alpha),\beta))=
\tau(p_1,\alpha)\tau(p_{k+1},\beta).
\end{align*}
Because $|\lambda(s_1,\alpha)|=|\tau(p_1,\alpha)|$ and $\lambda(s_{k+1},\beta)\neq \tau(p_{k+1},\beta)$, we get $\lambda(s_1,\alpha\beta)\neq \tau(p_1,\alpha\beta)$.
Since $\alpha\beta\in U(s_{1})\cap U(p_{1})$, this contradicts $s_1\yalike p_1$.
\end{list}
The proof is complete.\yfim
\end{proof}

The next result guarantees the existence of bi-simulations in the presence of blocking test cases. 
\begin{lemma}\label{alike-bisimulation}
Let $T$ be a $m$-perfect test suite for a  FSM $M$. 
Let $N$ be a FSM with at most $m$ states such that $M\yalike_T N$.
Then $M$ and $N$ are bi-similar. 
\end{lemma}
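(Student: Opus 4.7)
The plan is to first upgrade the hypothesis $M \yalike_T N$ to the full alikeness $M \yalike N$ by applying the contrapositive of Definition~\ref{def:n-perfecteness}: because $T$ is $m$-perfect for $M$ and $N$ has at most $m$ states, the conditional ``if $M \ynalike N$ then $M \ynalike_T N$'' forces $s_0 \yalike q_0$. Note that this is the only place where the $m$-perfectness hypothesis is actually used; without this step, the candidate bi-simulation below would not even be guaranteed to contain the initial-state pair.

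With $s_0 \yalike q_0$ in hand, I would take the alikeness relation itself as the candidate bi-simulation. Set $R = \{(s, q) \in S \times Q : s \yalike q\}$ and $L = \{(q, s) \in Q \times S : q \yalike s\}$. Clearly $(s_0, q_0) \in R$ and $(q_0, s_0) \in L$. To verify that $R$ is a simulation of $M$ by $N$, suppose $(s, q) \in R$ and $s \ydio{x}{a} r$. Since $s \yalike q$, the equality $U(s) = U(q)$ gives $x \in U(q)$, so $q \ydio{x}{b} p$ for a unique $p \in Q$; and alikeness at input $x$ forces $a = \lambda(s, x) = \tau(q, x) = b$, so $q \ydio{x}{a} p$. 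That $r \yalike p$ is the cross-machine version of Lemma~\ref{lem:alike}(1): for any $\alpha$, determinism turns $x\alpha \in U(s) = U(q)$ into $\alpha \in U(r)$ iff $\alpha \in U(p)$, so $U(r) = U(p)$; and for $\alpha$ in this common set, stripping the common output prefix $a$ from $\lambda(s, x\alpha) = \tau(q, x\alpha)$ gives $\lambda(r, \alpha) = \tau(p, \alpha)$. Hence $(r, p) \in R$.

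The symmetric check that $L$ is a simulation of $N$ by $M$ is entirely analogous, using the symmetry of $\yalike$ between two machines. Hence $M$ and $N$ are bi-similar. The only subtlety is that Lemma~\ref{lem:alike}(1) is stated for states of a single machine, but its proof uses only determinism and the defining properties of $\yalike$, both of which are available in the two-machine setting already foreshadowed by Remark~\ref{rema:alike-trans}; so no new combinatorial obstacle arises, and the real content of the proof is simply the contrapositive step that lifts $T$-alikeness to full alikeness.
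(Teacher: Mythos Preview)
Your proof is correct. Both you and the paper begin by applying the contrapositive of $m$-perfectness to upgrade $M\yalike_T N$ to $s_0\yalike q_0$; that step is identical. Where you diverge is in the choice of simulation relation: the paper takes $R_1=\{(s,q):\delta(s_0,\alpha)=s,\ \mu(q_0,\alpha)=q\text{ for some }\alpha\}$, the ``synchronously reachable'' pairs, and then invokes Lemma~\ref{alike-walk} to show that every such pair satisfies $s\yalike q$ (and hence that outputs match and the step condition holds). You instead take $R=\{(s,q):s\yalike q\}$ directly, so you skip Lemma~\ref{alike-walk} entirely and only need the one-step propagation of alikeness, i.e.\ the cross-machine analogue of Lemma~\ref{lem:alike}(1). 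Your relation is larger (it contains $R_1$, by Lemma~\ref{alike-walk}), but the verification is shorter and more self-contained; the paper's relation has the mild advantage that $(r,p)\in R_1$ is immediate from the definition once one knows $r=\delta(s_0,\alpha x)$ and $p=\mu(q_0,\alpha x)$, whereas you must argue $r\yalike p$ separately. Neither route introduces any real difficulty beyond the other.
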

\begin{proof}
Define a relation $R_1\ysse S\times Q$ by letting
$(s,q)\in R_1$ if and only if $\delta(s_0,\alpha)=s$ and $\mu(q_0,\alpha)=q$ for some $\alpha \in \yias$, $s\in S$ and $q\in Q$.
Since $\delta(s_0,\yeps)=s_0$ and $\mu(q_0,\yeps)=q_0$ we get $(s_0,q_0)\in R_1$.

Now assume $(s,q)\in R_1$ and let $s\ydio{x}{a}r$ for some $r\in S$, $x\in \yia$
and $a\in \yoa$.
Since $(s,q)\in R_1$, the definition of $R_1$ gives some $\alpha \in \yias$ such that $\delta (s_0,\alpha)=s$ and $\mu(q_0,\alpha)=q$.
Composing, we get $\delta (s_0,\alpha x)=\delta (s, x)=r$ and so $\alpha x \in U(s_0)$.
Since $T$ is $m$-perfect for $M$ and $M\yalike_T N$, Definition~\ref{def:m-perfect} gives $M\yalike N$, that is $s_0\yalike q_0$.
Further, Definition~\ref{disting-equiv-block} and Remark~\ref{rema:equal-Us} imply 
$U(s_0)= U(q_0)$, and so $\alpha x\in U(q_0)$.
Then $\mu(q, x)=p$, for some $p\in Q$. 
Since $s_0\yalike q_0$, $\delta (s_0,\alpha)=s$ and $\mu(q_0,\alpha)=q$, Lemma~\ref{alike-walk} gives $s\yalike q$.
But $x\in U(s)\cap U(q)$, and so we must have $a=\lambda(s,x)=\tau(q,x)$.
Thus, we have found $p\in Q$ with $q\ydio{x}{a} p$.
Since $\delta(s_0,\alpha x) =r$ and $\mu(q_0,\alpha x)=p$, we also have
$(r,p)\in R_1$. This shows that $R_1$ is a simulation relation.

A similar argument will show that $R_2\ysse Q\times S$, where $R_2=R_1^{-1}$, is also a simulation relation.
Thus $M$ and $N$ are bi-similar, as desired. \yfim
\end{proof}

We now show the converse, that is, if $M$ is bi-similar to any FSM $N$ with at most $m$ states that is 
$T$-alike to it, then $T$ is a $m$-perfect test suite for $M$.
\begin{lemma}
\label{bisimulation-m-perfect}
Let $M$ be a FSM, $T$ a test suite for $M$, and $m\geq 1$. 
Assume that any FSM that is $T$-alike to $M$ with at most $m$ states is bi-similar to it.
Then $T$ is $m$-perfect for $M$.
\end{lemma}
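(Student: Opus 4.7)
The plan is to argue by contraposition: assume $T$ is not $m$-perfect for $M$ and derive that the hypothesis of the lemma must fail. Unfolding Definition~\ref{def:n-perfecteness} (rewritten as Definition~\ref{def:m-perfect}), the assumption yields an FSM $N$ with at most $m$ states such that $M\ynalike N$ and $M\yalike_T N$. The hypothesis then forces $M$ and $N$ to be bi-similar, so it suffices to show that bi-similarity between $M$ and $N$ entails $M\yalike N$; this contradicts $M\ynalike N$ and completes the proof.

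The key intermediate step is therefore the implication ``bi-similar $\Rightarrow$ alike'' at the level of initial states. Let $R\ysse S\times Q$ be a simulation of $M$ by $N$ and $L\ysse Q\times S$ a simulation of $N$ by $M$, so that $(s_0,q_0)\in R$ and $(q_0,s_0)\in L$. First I would show $U(s_0)=U(q_0)$ by two applications of Fact~\ref{fact:simul}: any $\alpha\in U(s_0)$ gives $\delta(s_0,\alpha)=r$, so Fact~\ref{fact:simul} provides $\mu(q_0,\alpha)=t$, i.e.\ $\alpha\in U(q_0)$; the reverse inclusion uses $L$ symmetrically. Hence $(U(s_0)\ysyd U(q_0))\cap\yias=\yemp$.

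Second, I would show that on every common input sequence the outputs coincide: for $\alpha\in U(s_0)\cap U(q_0)$, one proves $\lambda(s_0,\alpha)=\tau(q_0,\alpha)$ by induction on $|\alpha|$. The basis is immediate, and in the inductive step one decomposes $\alpha=\beta x$, applies the induction hypothesis along the simulated path to match outputs on $\beta$, and then invokes the simulation clause of Definition~\ref{simulation} on the $x$-transition out of $(\delta(s_0,\beta),\mu(q_0,\beta))\in R$ to match the final output symbol. By Definition~\ref{disting-equiv-block}, this gives $s_0\yalike q_0$, i.e.\ $M\yalike N$.

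I expect no serious obstacle. The two slightly delicate points are (i) noticing that from $M\yalike_T N$ one still only gets $T$-alikeness, not alikeness, so the contradiction must come from bi-similarity and not directly from $M\yalike_T N$; and (ii) carefully threading the induction for output agreement, since the simulation definition only records per-transition matching and one must iterate it along $\alpha$. Both are routine once the decomposition via Fact~\ref{fact:simul} is in hand.
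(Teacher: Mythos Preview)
Your proposal is correct and follows essentially the same approach as the paper: both argue by contradiction, obtain a witness $N$ with at most $m$ states that is $T$-alike but not alike to $M$, invoke the hypothesis to get bi-similarity, and then derive a contradiction by showing bi-similarity forces $M\yalike N$. The only cosmetic difference is that the paper organizes the final step as a case analysis on the two ways $M\ynalike N$ can hold (either $U(s_0)\ysyd U(q_0)\neq\yemp$, or outputs differ on a common input), taking a minimal-length witness in each case, whereas you prove $U(s_0)=U(q_0)$ and output agreement directly; these are the same argument unwound in opposite directions.
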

\begin{proof}
We proceed by contradiction.
Assume that $T$ is not $m$-perfect for $M$.
Then, by Definition~\ref{def:m-perfect}, there exists a  FSM $N$ with at most $m$ states such that  $M\yalike_T N$ and $M\ynalike N$.
Hence, since $M\yalike_T N$, by Theorem~\ref{necessity-sufficiency-block} we know that $N$ is bi-similar to $M$, and so we have simulation relations $R_1\ysse S\times Q$ and 
$R_2\ysse Q\times S$.

Since $M\ynalike N$, by Definition~\ref{disting-equiv-block} we have two cases:
\begin{list}{}{\setlength{\leftmargin}{3ex}\setlength{\labelwidth}{1ex}}
\item[\hspace*{-1ex}\sc Case 1:] $\alpha\in U(s_0)\ysyd U(q_0)$, for some $\alpha\in \yias$. 

\noindent We may assume that $|\alpha|$ is minimum.
If $\alpha\in U(q_0)$ and $\alpha\not\in U(s_0)$, then we may write
$\alpha=\beta x$, where $\beta\in \yias$, $x\in\yia$ are such that $\beta\in U(q_0)\cap U(s_0)$.
Thus, $\delta(s_0,\beta)=s$, $\mu(q_0,\beta)=q$ and $\mu(q,x)=p$, for some $s\in S$ and some $q, p\in Q$.
Since $(q_0,s_0)\in R_2$, we can use Lemma~\ref{alike-walk} and write $(q,s)\in R_2$.
Because $R_2$ is a simulation and $\mu(q,x)=p$ we get some $r\in S$ such that $\delta(s,x)=r$.
But this gives $\delta(s_0,\alpha)=\delta(s_0,\beta x)=\delta(s,x)=r$, that is 
$\alpha\in U(s_0)$, a contradiction.
When $\alpha\not\in U(q_0)$ and $\alpha\in U(s_0)$, the argument is analogous.

\item[\hspace*{-1ex}\sc Case 2:] There is some $\alpha\in U(s_0)\cap U(q_0)$ 
with $\lambda(s_0,\alpha)\neq \tau(q_0,\alpha)$.

\noindent Again, assume that $|\alpha|$ is minimum.
Then, there are $\beta\in\yias$, $x\in\yia$, $s\in S$ and $q\in Q$ such that
$\alpha=\beta x$ and 
$\delta(s_0,\beta)=s$, $\mu(q_0,\beta)=q$.
Further, we get some $r\in S$, $p\in Q$ such that 
$\delta(s,x)=r$, $\mu(q,x)=p$, and $a=\lambda(s,x)\neq\tau(q,x)=b$.
Using the Lemma~\ref{alike-walk}, we may write $(s,q)\in R_1$.
Because we have $s\ydio{x}{a} r$ in $M$ and $R_1$ is a simulation, we know that there is some $t\in Q$
such that $q\ydio{x}{a} t$ in $N$, with $(r,t)\in R_1$.
But we already had $q\ydio{x}{b} p$ in $N$.
Hence, since $N$ is deterministic, we conclude that $a=b$, which is
a contradiction.
\end{list}
The proof is now complete. \yfim
\end{proof}

Combining the previous results we obtain necessary and sufficient conditions for $m$-perfectness. 
\begin{theorem}
\label{necessity-sufficiency-perfect}
Let $M$ be a FSM, $T$ be a test suite for $M$, and $m\geq 1$. 
Then $T$ is $m$-perfect for $M$ iff any $T$-alike FSM with at most $m$ states is bi-similar to $M$. 
\end{theorem}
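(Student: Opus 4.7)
The plan is straightforward: Theorem \ref{necessity-sufficiency-perfect} is simply the combination of Lemma \ref{alike-bisimulation} and Lemma \ref{bisimulation-m-perfect}, which together already supply both directions. The proof therefore reduces to stating each direction and invoking the corresponding lemma.

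For the forward implication, I would assume that $T$ is $m$-perfect for $M$ and let $N$ be an arbitrary FSM with at most $m$ states such that $M\yalike_T N$. Then Lemma \ref{alike-bisimulation} applies verbatim and yields that $M$ and $N$ are bi-similar, which is exactly what must be shown.

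For the converse, I would assume that every FSM with at most $m$ states that is $T$-alike to $M$ is bi-similar to $M$. This is precisely the hypothesis of Lemma \ref{bisimulation-m-perfect}, so invoking it gives that $T$ is $m$-perfect for $M$.

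There is no substantive obstacle, because the two cited lemmas already carry all the content: Lemma \ref{alike-bisimulation} builds the explicit simulation $R_1=\{(s,q) : \delta(s_0,\alpha)=s \text{ and } \mu(q_0,\alpha)=q \text{ for some } \alpha\in\yias\}$ together with its inverse and uses Lemma \ref{alike-walk} to propagate alikeness, while Lemma \ref{bisimulation-m-perfect} derives a contradiction from a minimal distinguishing input by a case split on whether it shows up in the symmetric difference $U(s_0)\ysyd U(q_0)$ or in a mismatch of outputs. The only care needed when writing this up is to note that the universe of FSMs in Definition \ref{def:m-perfect} matches exactly the universe quantified over in both lemmas and in the theorem statement, so that nothing is lost or gained in either direction.
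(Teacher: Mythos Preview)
Your proposal is correct and mirrors the paper's own proof exactly: the forward direction is obtained by invoking Lemma~\ref{alike-bisimulation}, and the converse by invoking Lemma~\ref{bisimulation-m-perfect}. The additional remarks you include about the internal workings of those lemmas are accurate but go beyond what the paper records in the proof of Theorem~\ref{necessity-sufficiency-perfect} itself.
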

\begin{proof}
Assume that $T$ is $m$-perfect for $M$.
Lemma~\ref{alike-bisimulation} guarantees that $N$ and  $M$ are bi-similar when $N$ is $T$-alike to $M$.
Now assume that any $T$-alike FSM with at most $m$ states is bi-similar to $M$. 
In this case, Lemma~\ref{bisimulation-m-perfect} guarantees that $T$ is $m$-perfect for $M$.  \yfim
\end{proof}

\section{Conclusions}\label{conclusion}

In this work we have studied the notion of test suite perfectness,
a notion similar to the classical one of test suite completeness, but now we may have the presence of so called  blocking test cases, that is, test cases that may not run to completion
either in the specification or in implementation models.
An accompanying notion of $p$-reduction was also introduced, similar to the classical notion of reduction  in FSMs.

We showed that any FSM can be $p$-reduced while maintaining the perfectness property, when
it was already present in the original FSM.
Using this result, we then proved that when the specification model and implementations to be put under test are both $p$-reduced, then perfectness can be characterized in terms of an isomorphism between both models.

We then established the relationship between perfectness and the classical notion of completeness.
We showed that perfectness is a strictly stronger relation, for specifications models of any sizes.
We then showed that when testing for perfectness one has to impose a limit on the number of states of the implementation models that are put under test.
This result was a consequence of a similar bound of the form $kn$ that we showed
must be imposed on the size of implementations when also testing for the classical notion
of $n$-completeness.
Here, $k$ is a constant that depends only on the test suite and $n$ is the number os states in 
the specification model.

We then characterized the $m$-perfectness property by establishing a necessary and sufficient
condition on the implementation models that are put under test, given a test suite and a specification model.
 
For future studies, we mention developing and testing a practical algorithm for testing
$m$-perfectness. 
Further, it may be the case that one can obtain tighter bounds on the size of implementation
models when testing for either $m$-perfectness or for $n$-perfectness.

\end{document}